\newtheorem{thm}{Theorem}
\newtheorem{lem}[thm]{Lemma}%
\newtheorem{coro}[thm]{Corollary}%
\newtheorem{asm}[thm]{Assumption}%
\numberwithin{equation}{section}%
\newcommand{\norm}[1]{\left\Vert#1\right\Vert}
\newcommand{\snorm}[1]{\left\Vert#1\right\Vert_\infty}
\newcommand{\inn}[1]{\left\langle#1\right\rangle}
\newcommand{\abs}[1]{\left\vert#1\right\vert}
\newcommand{\set}[1]{ \left\{ #1 \right\} }
\newcommand{\innp}[1]{\left( #1 \right)}
\renewcommand{\b}[1]{\mathbb{#1}}
\newcommand{\s}[1]{\mathcal{#1}}
\renewcommand{\d}[1]{\mathrm{d}#1}
\renewcommand{\l}{\ell}
  \newcommand{\nn}{\nonumber\\}
\newcommand{\todistribution}{\xrightarrow{d}}
\newcommand{\toprobability}{\xrightarrow{p}}
\newcommand{\toas}{\xrightarrow{a.s.}}
  \newcommand{\dd}{\mathrm{d}}
\begin{document}

\title{Asymptotic convergence in distribution of the area bounded by prevalence-weighted Kaplan-Meier curves using empirical process modeling}
\author{Aaron Heuser$^2$, Minh Huynh$^2$, Joshua C. Chang$^1$}
\address{$^1$ Epidemiology and Biostatistics Section,  Rehabilitation Medicine Department, 
The National Institutes of Health, Clinical Center, Bethesda, Maryland 20892, U.S.A. \\
$^2$ Impaq International LLC, Washington, DC 20005, U.S.A. \\
}

\keywords{Survival analysis, Kaplan-Meier, Heterogeneous distribution, Nonparametric, Hypothesis test, Asymptotic analysis}
\corres{Joshua C. Chang \\
\email{josh.chang@nih.gov}} 

  \begin{abstract}
 
The Kaplan-Meier product-limit estimator is a simple and powerful tool in time to event analysis. An extension exists 
for populations stratified into cohorts where a population survival curve is generated
by weighted averaging of cohort-level survival curves. 
 For making population-level comparisons using this statistic, we 
analyze the statistics of the area between two such weighted survival curves.
 We derive the large sample behavior of this statistic based
on an empirical process of product-limit estimators. This estimator
was used by an interdisciplinary NIH-SSA team in the identification of medical
conditions to prioritize for adjudication in disability benefits processing.

  \end{abstract}

\begin{fmtext}

\end{fmtext}

\maketitle

  \section{Introduction}
    \label{s:intro}

Survival analysis addresses the classical statistical problem of determining characteristics
of the waiting time until an event, canonically death, from observations of 
their occurrence sampled from within a population. This problem is not trivial as the expected waiting
time is typically dependent on the time-already-waited. 
 For instance, a hundred-year-old can be more certain
 of surviving to his or her one hundred and-first birthday than a newborn might reasonably be.
 However, the comparison may shift in the newborn's favor for the living to one-hundred and twenty-one,
 particularly in light of medical advances that make survival probabilities non-stationary.
Parametric approaches for assembling survival curves
are usually not flexible enough to capture this complexity.

One simple approach to this problem was pioneered by the work of Kaplan and Meier \cite{kaplan1958nonparametric}.
Their product-limit estimator 
    \cite{gill1980censoring,gill1981large,van1996new,shorack2009empirical} is a non-parametric statistic that is used for
  inferring the survival function for members of a population from observed lifetimes.
This method is particularly useful in that it naturally handles the presence of right censoring,
 where some event-times are only partially observed because they fall outside the observation window.
It was not, however, designed to account for varying subpopulations that may yield non-homogeneity
in overall population survival (Fig.~\ref{fig:fig1}). For instance, in the example given above, subpopulations for survival
characteristics may be defined by birth year or entry cohort of a subject in a particular study (Fig.~\ref{fig:fig1}).

 Several existing statistical methods address variants of this limitation.
A natural approach is to consider the varying subpopulations as defining underlying 
    covariates, thus laying the framework for a proportional hazards model. The assumption
    of proportional hazards is quite strong. When considering time-dependent 
    statistics (as in the motivational example), it is violated in all but a few specific cases.
    Likewise, frailty models, first developed by Hougaard (cf.~\cite{hougaard1984life}), and 
    extended by Aalen (cf.~\cite{aalen1994effects}), assume multivariate event 
    distributions, but also make assumptions on the underlying event 
    distributions and assume proportional hazards. 
    
    Other existing methods, such as bivariate survival analysis 
    (cf.~\cite{lin1993simple}), consider the time to 
    observation and the time to event as conditionally independent random times. 
    Underlying these methods is the assumption that upon the time of observation, 
    all individuals will then have a similar event time distribution, thus 
    failing to acknowledge the temporal changes.

These complexities arose in the identification of new disorders to incorporate into the United States Social Security
Administration (SSA)'s Compassionate Allowances (CAL) initiative. The CAL initiative seeks to identify candidate
medical conditions for fast-tracking in the processing of disability applications. The intent of this initiative is 
to prioritize applicants who are most likely to die in the time-course of usual case processing so that they
may receive benefits while still living. 

At its inception, the CAL initiative identified conditions based on the counsel of expert opinion~\cite{rasch2014first}.
The SSA in collaboration with the National Institutes of Health (NIH) sought to expand the list of CAL
conditions systematically, using a data-based approach.
Using in-part the survival estimator described in this manuscript, the NIH identified 24 conditions for 
inclusion into the list of conditions~\cite{rasch2014first}.

\begin{figure}
\center
\includegraphics[width=21pc]{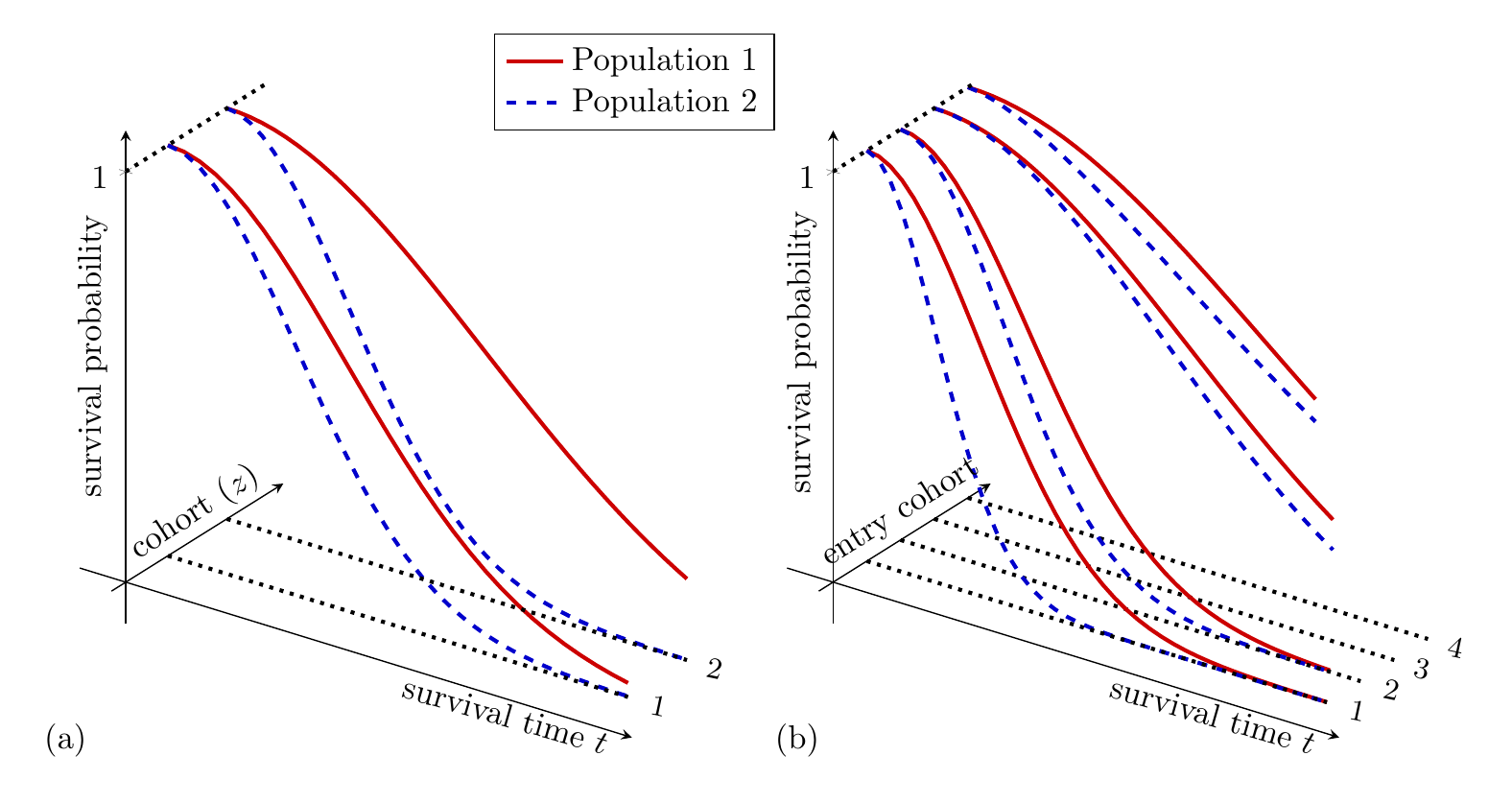}
\caption{\textbf{Inhomogeneity} of survival within populations can result due to at least two reasons. In (a), inhomogeneity results from a categorical covariate that influences survival statistics. In (b), inhomogeneity results from non-stationarity, where cohorts of individuals are sampled at different times. In this case, the problem of progressive censoring is apparent because later cohorts have not been observed as long.} 
\label{fig:fig1}
\end{figure}



    The methodology used in CAL is related to that of the work of Pepe and Fleming (cf.~\cite{pepe1989weighted,pepe1991weighted}), 
    where a class of weighted Kaplan-Meier statistics is introduced. Though these statistics
    exhibit the same limitations as in the standard Kaplan-Meier case, it 
    should be noted that \cite{pepe1991weighted} introduces the stratified weighted 
    Kaplan-Meier statistic. The statistic presented here is a priori quite 
    similar, but instead of a weighting function, includes the empirical 
    prevalence. In doing so, the weight is no longer independent of the event 
    time estimate, and thus requires much different methods of proof.
     
    We thus consider the overall survival distribution for a 
    population of individuals with sub-populations that exhibit 
    non-homogeneous survival distributions. Through this consideration, a 
    new test statistic, based upon the empirical process of 
    product-limit estimators is developed. Through constructive methods, 
    this test-statistic compares survival distributions among the distinct 
    subpopulations, and weights according to distribution of the 
    identified subgroups.

\section{Statistical method}


Suppose $\Gamma^{(1)}$ and $\Gamma^{(2)}$ are disjoint populations of individuals where
each individual belongs to exactly one of $d$ distinct cohorts labeled $z\in\mathbb{Z}_d$.
For randomly selected individuals $\gamma  \in \Gamma^{(i)}$ within population $i$, we desire to understand the 
statistics of the event time $T^\gamma$  under the assumption that survival is
 conditional on cohort $z^\gamma$ and population.

One representation of the marginal survival probability for members of population $i$, 
$\theta^{(i)}_t = \b{P}{\set{T^\gamma > t \mid\gamma\in\Gamma^{(i)}}},$ is found by conditioning
on cohort,
    \begin{equation}
      \theta^{(i)}_t =  \sum_{z=1}^d \underbrace{\b{P}{\set{T^\gamma>t\mid z^\gamma=z, \gamma\in\Gamma^{(i)}}}}_{S^{(i)}_{z,t}} 
      \underbrace{\b{P}{\set{z^\gamma=z\mid \gamma\in\Gamma^{(i)}}}}_{q^{(i)}_z}, 
      \label{eq:conditioningsum}
    \end{equation}
    where $S^{(i)}_{z,t}$ represents the survival function for individuals of cohort $z$ in population $i$,
    where each individual's cohort membership is known. 
  
    We use this representation of the survival probability as motivation to formulate an 
    estimator for the population-average survival functions
    \begin{equation}
    \hat{\theta}^{(i)}_t = \sum_{z=1}^d \hat{q}^{(i)}_z \hat{S}^{(i)}_{z,t},\label{eq:thetat}
    \end{equation}
    where $\hat{q}^{(i)}_z$ and $\hat{S}^{(i)}_{z,t}$ are estimators of the cohort prevalence and 
    cohort-wise survival respectively.  This weighted Kaplan Meier method has appeared previously in the literature~\cite{murray2001using},
    and has been empirically validated against the pure Kaplan Meier method~\cite{zare2014comparison}, where the weighting
    procedure was found to reduce the bias in the construction of survival curves.
    The asymptotic convergence of 
    the product-limit estimator and weighted variants is well established~\cite{cai1998asymptotic, pepe1991weighted}.
     We use this survival curve reconstruction method as a base in constructing
    a new statistic for comparing populations. The focus of this manuscript is not the properties of this 
    survival estimator but rather the asymptotic convergence of its bounding area and the use of such
    a quantity for evaluating a null hypothesis.
    
    Our concern is the general situation where random samples of size $n^{(i)}$ are chosen from each 
    of the respective populations. Within these samples, the number of individuals within each cohort, $n_z^{(i)},$ is 
    counted, from which an estimator of the cohort distribution is obtained,
    \begin{equation}
    \hat{q}_z^{(i)} = \frac{n^{(i)}_z}{n^{(i)}}. \label{eq:q_z}
    \end{equation}
    In turn, we assume that the cohort-level survival functions $\hat{S}^{(i)}_{z,t}$ are estimated independently
     using  the product-limit estimator. Note that since the product limit estimator is not a linear functional
     of sampled lifetimes, $\hat{\theta}_t^{(i)}$ is distinct from the estimator 
     obtained by applying the product limit estimator directly on all $n^{(i)}$ samples of population $i$. To
     prevent confusion, we denote all direct applications of the  product-limit estimator using $\hat{S}$ and all 
     instances of weighted sums of product limit estimators using the Greek letter $\hat\theta.$ 
     
     With these elements in place, we define our test statistic
\begin{equation}
\hat{\Theta} = \sqrt{\frac{n^{(1)}n^{(2)}}{n^{(1)}+n^{(2)}}} \int_0^\tau \dd{t}  \left(\hat{\theta}^{(1)}_{t} -\hat{\theta}^{(2)}_{t} \right),\label{eq:Theta}
\end{equation}
where $\tau=\inf\set{\tau_z: z\in\mathbb{Z}_d}$, and $\tau_z$ denotes the time at which cohort $z$ is censored in observations. Note that in the absence of censoring this statistic is equivalent to comparison of mean lifetimes between the two populations~\cite{pepe1989weighted}. We state here the 
main result of the paper -- the large sample behavior of this statistic within a null-hypothesis statistical testing framework.
   \begin{thm}
      \label{thm:main}
      Let $C^{(i)}_{z, t}$ denote the probability that 
      a $z$-type individual has not yet been censored at time $t \geq 0$ (the survival probability
      relative to the occurrence of censoring), and 
      $q^{(i)}_z$ denote the probability that an individual in population $i$ is of cohort 
      $z$, and let $p^{(i)} = n^{(i)}/(n^{(1)}+n^{(2)}).$ Suppose that $\theta^{(1)}_t = \theta^{(2)}_t.$ Then $\hat{\Theta}\xrightarrow{d} N(0,\sigma^{2})$, as $n^{(i)} \to \infty$, with
      \begin{align*}
        \sigma^{2} &= \sum_{i = 1}^{2} (1-p^{(i)})\innp{\sum_{z = 1}^d q^{(i)}_z \phi^{2}_z 
        - \innp{\sum_{z = 1}^d q^{(i)}_z \phi_z}^{2}} \nn
          &\quad- \sum_{z = 1}^d \int_0^{\tau_z}\!\d{S}_{z, t}\,
          W_{z, t}\times\innp{\frac{\phi_{z, t}}{S_{z, t}}}^{2},
      \end{align*}
      where for $0 \leq t \wedge \tau_z$, where $\tau_z$ is the time at which samples of cohort $z$ are censored,
       $\phi_{z, 
      t} = \int_t^{\tau_z}\!\d{s}\, S_{z, s}$, $\phi_z \equiv \phi_{z, 
      0}$, $S_{z,t}$ is the survival function for the pooled data of cohort $z$, and
      \begin{align*}
        W_{z, t} =  \innp{\frac{p^{(1)} C^{(1)}_{z, t-} q^{(2)}_z + p^{(2)} C^{(2)}_{z, t-} 
        q^{(1)}_z}{C^{(1)}_{z, t-} C^{(2)}_{z, t-}}}.
      \end{align*}
   Note that this quantity is well-defined since by definition of $\tau_z$, ${C}^{(z)}_{z,t} >0 $ for all $t\leq \tau_z$.   The variance $\sigma^{2}$ may be consistently estimated by
      \begin{align}
        \hat{\sigma}^{2} &= \sum_{i = 1}^{2} (1-p^{(i)})\innp{\sum_{z = 1}^d 
        \hat{q}^{(i)}_z \hat{\phi}_z^{2} - \innp{\sum_{z = 1}^d \hat{q}^{(i)}_z 
      \hat{\phi}_z}^{2}} \nn
        &\quad- \sum_{z = 1}^d \int_0^{\tau_z}\!\d{\hat{S}}_{z, t}\,
        \hat{W}_{z, t}\times\innp{\frac{\hat{\phi}_{z, t}}{\hat{S}_{z, t}}}^{2},
                \label{eqn:sigma2}
      \end{align}
      where for  $0 \leq t \wedge \tau_z$, 
      $\hat{S}_{z, t}$ is the product-limit estimate of the pooled data for cohort $z$, 
            \begin{equation}
      \hat{\phi}_{z, t} = \int_t^{\tau_z}\d{s}\, \hat{S}_{z, s},
      \end{equation}
      $\hat{C}^{(i)}_{z, t}$ is the product-limit estimate associated to the event 
      of censoring for cohort $z$ within population $i$, 
       $ \hat{\phi}_z \equiv \hat{\phi}_{z, 
      0},$ and
      \begin{align}
        \hat{W}_{z, t} =  \innp{\frac{p^{(1)} \hat{C}^{(1)}_{z, t-} \hat{q}^{(2)}_z 
          + p^{(2)} \hat{C}^{(2)}_{z, t-} \hat{q}^{(1)}_z}{\hat{C}^{(1)}_{z, t-} 
        \hat{C}^{(2)}_{z, t-}}}. \label{eq:What}
      \end{align}
    \end{thm}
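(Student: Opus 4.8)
The plan is to expand $\hat{\Theta}$ into a signed sum of asymptotically independent, asymptotically Gaussian pieces and to read the limiting variance off that decomposition. Write $N=n^{(1)}+n^{(2)}$, so that $n^{(i)}/N\to p^{(i)}$, $n^{(i)}_z/N\to p^{(i)}q^{(i)}_z$, and $\sqrt{n^{(1)}n^{(2)}/N}=\sqrt{p^{(1)}p^{(2)}}\,\sqrt{N}+o(\sqrt{N})$. Expanding the bilinear map $(\hat{q},\hat{S})\mapsto\sum_z\hat{q}_z\int\hat{S}_z$ about the truth, one decomposes
\begin{equation*}
\int_0^\tau\innp{\hat{\theta}^{(1)}_t-\hat{\theta}^{(2)}_t}\dd{t}=\int_0^\tau\innp{\theta^{(1)}_t-\theta^{(2)}_t}\dd{t}+\sum_{i=1}^2(-1)^{i+1}\innp{A^{(i)}+B^{(i)}}+R_N ,
\end{equation*}
where $A^{(i)}=\sum_z(\hat{q}^{(i)}_z-q^{(i)}_z)\,\phi_z$ is a ``prevalence'' fluctuation, $B^{(i)}=\sum_z q^{(i)}_z\int_0^{\tau_z}(\hat{S}^{(i)}_{z,t}-S_{z,t})\,\dd{t}$ is a ``curve'' fluctuation (with $S_{z,\cdot}$ the per-cohort survival --- common across populations under the null and estimated by the pooled product-limit estimator $\hat{S}_{z,\cdot}$ in \eqref{eqn:sigma2}), and $R_N=\sum_{i=1}^2(-1)^{i+1}\sum_z(\hat{q}^{(i)}_z-q^{(i)}_z)\int_0^{\tau_z}(\hat{S}^{(i)}_{z,t}-S_{z,t})\,\dd{t}$ is a sum of products of two $O_p(N^{-1/2})$ quantities. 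Under the null the first integral vanishes and $\sqrt{N}\,R_N=o_p(1)$, so $\hat{\Theta}=\sqrt{n^{(1)}n^{(2)}/N}\,\big[(A^{(1)}+B^{(1)})-(A^{(2)}+B^{(2)})\big]+o_p(1)$.

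Second I would assemble the weak-convergence inputs. For each $(i,z)$, the classical counting-process representation of the product-limit estimator (as in \cite{gill1980censoring,gill1981large,van1996new,shorack2009empirical}) gives, uniformly on $[0,\tau_z]$, $\hat{S}^{(i)}_{z,t}-S_{z,t}=-S_{z,t}\int_0^t\frac{\dd{M^{(i)}_{z,s}}}{Y^{(i)}_{z,s}}+o_p(N^{-1/2})$, where $Y^{(i)}_{z,\cdot}$ is the cohort-$z$ at-risk process in population $i$, $M^{(i)}_{z,\cdot}$ the associated event-counting martingale with $\dd\langle M^{(i)}_z\rangle_s=Y^{(i)}_{z,s}(1-\Delta\Lambda_{z,s})\,\dd\Lambda_{z,s}$, $\Lambda_{z,\cdot}$ the pooled cohort-$z$ cumulative hazard, and $Y^{(i)}_{z,s}/N\to p^{(i)}q^{(i)}_z\,S_{z,s-}C^{(i)}_{z,s-}$ uniformly. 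Interchanging the order of integration turns $\int_0^{\tau_z}(\hat{S}^{(i)}_{z,t}-S_{z,t})\,\dd{t}$ into $-\int_0^{\tau_z}\frac{\phi_{z,s}}{Y^{(i)}_{z,s}}\,\dd{M^{(i)}_{z,s}}+o_p(N^{-1/2})$, with $\phi_{z,s}=\int_s^{\tau_z}S_{z,u}\,\dd{u}$ exactly as in the statement --- a stochastic integral against a square-integrable martingale, square-integrability on $[0,\tau_z]$ being precisely where $C^{(i)}_{z,\tau_z-}>0$ enters. Rebolledo's martingale central limit theorem for the $B$-pieces, the multinomial central limit theorem for the $A$-pieces, and the mutual independence of all the $\hat{S}^{(i)}_{z,\cdot}$ and $\hat{q}^{(i)}$ (distinct subsamples; an individual's cohort label independent of that individual's within-cohort lifetime) then give joint asymptotic normality of $\sqrt{N}(A^{(1)},A^{(2)},B^{(1)},B^{(2)})$ with diagonal limiting covariance.

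Third, since $\hat{\Theta}$ is, up to $o_p(1)$, a fixed linear functional of that jointly asymptotically normal vector multiplied by the convergent scalar $\sqrt{n^{(1)}n^{(2)}/N}$, Slutsky's theorem yields $\hat{\Theta}\todistribution N(0,\sigma^{2})$ --- Slutsky being what licenses freezing the random scaling at $\sqrt{p^{(1)}p^{(2)}}\,\sqrt{N}$ and replacing $\hat{q}^{(i)}_z$ by $q^{(i)}_z$ inside $B^{(i)}$. The variance is the sum of the four limiting variances. For the prevalence pieces, $N\,\mathrm{Var}(A^{(i)})\to(p^{(i)})^{-1}\big(\sum_z q^{(i)}_z\phi_z^{2}-(\sum_z q^{(i)}_z\phi_z)^{2}\big)$ by the multinomial covariance, and $n^{(1)}n^{(2)}/(Nn^{(i)})\to 1-p^{(i)}$, which gives the first line of $\sigma^{2}$. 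For the curve pieces, the predictable-variation formula together with $Y^{(i)}_{z,s}/N\to p^{(i)}q^{(i)}_zS_{z,s-}C^{(i)}_{z,s-}$ gives $N\,(q^{(i)}_z)^{2}\,\mathrm{Var}\!\Big(\int_0^{\tau_z}\frac{\phi_{z,s}}{Y^{(i)}_{z,s}}\,\dd{M^{(i)}_{z,s}}\Big)\to(q^{(i)}_z/p^{(i)})\int_0^{\tau_z}\phi_{z,s}^{2}\,(S_{z,s-}C^{(i)}_{z,s-})^{-1}\dd\Lambda_{z,s}$; multiplying by $n^{(1)}n^{(2)}/N^{2}\to p^{(1)}p^{(2)}$, summing over $i$, and placing the two terms over the common denominator $C^{(1)}_{z,s-}C^{(2)}_{z,s-}$ assembles exactly $W_{z,s}$, and the identity $\dd{S}_{z,s}=-S_{z,s-}\dd\Lambda_{z,s}$ recasts $\sum_z\int_0^{\tau_z}\phi_{z,s}^{2}S_{z,s}^{-1}W_{z,s}\,\dd\Lambda_{z,s}$ as $-\sum_z\int_0^{\tau_z}\dd{S}_{z,t}\,W_{z,t}(\phi_{z,t}/S_{z,t})^{2}$, the second line of $\sigma^{2}$. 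Consistency of $\hat{\sigma}^{2}$ then follows from uniform consistency on $[0,\tau_z]$ of the product-limit estimators $\hat{S}_{z,\cdot}$ and $\hat{C}^{(i)}_{z,\cdot}$, consistency of $\hat{q}^{(i)}_z$ and $\hat{\phi}_{z,t}$, and the continuous mapping theorem.

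The main obstacle --- and the reason the weighted-Kaplan--Meier results of \cite{pepe1991weighted,cai1998asymptotic} do not apply off the shelf --- is the coupling flagged in Section~\ref{s:intro}: the weight $\hat{q}^{(i)}_z=n^{(i)}_z/n^{(i)}$ and the curve estimator $\hat{S}^{(i)}_{z,\cdot}$ are both driven by the random cohort count $n^{(i)}_z$, so $\hat{\Theta}$ is not a functional of empirical processes of a priori fixed sample size. I would handle this by conditioning on the vector of cohort counts: conditionally, the $B$-pieces remain centred stochastic integrals whose conditional covariances depend on the counts only through the ratios $N/n^{(i)}_z\to(p^{(i)}q^{(i)}_z)^{-1}$, the $A$-pieces are measurable with respect to the conditioning, and the only residual coupling --- the $O(1/n^{(i)}_z)$ conditional bias of $\int_0^{\tau_z}\hat{S}^{(i)}_{z,t}\,\dd{t}$ --- affects neither the $N^{-1/2}$-order limit nor the limiting covariance. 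Making rigorous that the prevalence/curve cross-covariances vanish in the limit and that the random rescalings may be frozen at their limits is where the real work lies; everything else is a careful assembly of standard counting-process and empirical-process facts.
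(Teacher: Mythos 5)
Your proposal is correct in substance and shares the paper's overall architecture: the same three-term expansion of each cohort contribution (prevalence fluctuation, curve fluctuation, and a cross term shown negligible, which is the content of Lemma~\ref{lem:conv_Z_first_term}), a martingale-based treatment of the Kaplan--Meier part, the multinomial CLT for the weights, and the same bookkeeping that assembles $W_{z,t}$ and converts $\mathrm{d}\Lambda_{z,t}$ into $\mathrm{d}S_{z,t}$; your scaling limits $n^{(1)}n^{(2)}/(N n^{(i)})\to 1-p^{(i)}$ and $n^{(1)}n^{(2)}/N^{2}\to p^{(1)}p^{(2)}$ reproduce exactly the two-population combination of Eq.~(\ref{eq:Thetadist}) and Corollary~\ref{cor:conv_sum}. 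Where you genuinely diverge is the step you yourself flag as ``the real work'': the coupling of $\hat{q}^{(i)}_z$ with $\hat{S}^{(i)}_{z,\cdot}$ through the random cohort counts. You propose conditioning on the count vector and arguing that conditional variances depend on the counts only through ratios converging to constants; the paper instead proves convergence of the curve process uniformly over deterministic proportions $a$ in a neighborhood $V$ of $q$ together with a Skorohod representation (Lemma~\ref{lem:conv_zeta_parts}), establishes continuity of the limit in $a$ by Kolmogorov's criterion (Lemma~\ref{lem:cont_a}), and then substitutes $a=\mathbf{n}/n$ via the continuous mapping theorem applied to the evaluation map $g(a,f)=f(a,\cdot)$ (Theorem~\ref{thm:conv_total}). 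Your conditioning route is a legitimate and arguably more elementary alternative, and your use of the counting-process representation with Rebolledo's CLT is a repackaging of the paper's appeal to Gill's theorems (4.2.1 of 1980 and 2.1 of 1981) for the process $B^{n_z}_{z,t}$ and the integral $\int(-\mathrm{d}\phi_{z,s})B^{n_z}_{z,s}$. Be aware, though, that as written your second paragraph asserts mutual independence of $\hat{q}^{(i)}$ and $\hat{S}^{(i)}_{z,\cdot}$, which is exactly what the coupling denies; the claim that the conditional and unconditional limits agree and that the random rescalings may be frozen is precisely what the paper's uniform-in-$a$ convergence plus continuity argument supplies, so in your plan that step remains to be carried out rather than being wrong.
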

    Note that this quantity is also well-defined since $\hat{C}^{(z)}_{z,t} >0 $ for all $t\leq \tau_z$.
  In Appendix~\ref{s:mainproof}, we provide a proof of Theorem~\ref{thm:main} in an empirical process framework.  Note that since survival estimates $\hat\theta$ and $\hat{S}$ are step functions, all integrals are exactly
  computable.

  \begin{section}{Numerical investigation}
    \label{s:sim}
    
    A computational implementation of the test statistic $\hat{\Theta}$ and weighted survival estimators is available
    in the form of a package for R. This package also contains a class to handle arithmetic involving right-continuous piecewise linear  functions.   In the appendices we have provided source code that may be used 
    for installing and invoking this package. 
    
    Here, we present a computational investigation of the weighted survival
    curve estimator and the corresponding test statistic. Using simulations, we investigated the statistical power of $\hat{\Theta}$, contrasted with that of existing non-parametric methods. Using a real dataset, we demonstrate
    the computation of $\hat{\Theta}$, $\hat{\theta}_t$, and evaluate Type-I error.
    
    \subsection{Evaluating statistical power through simulations}
    
    Using simulations, we explored the statistical power of the test statistic $\hat\Theta$ in a case
    where populations are difficult to distinguish based purely on mean survival time.
    \begin{figure}
    \includegraphics[width=\linewidth]{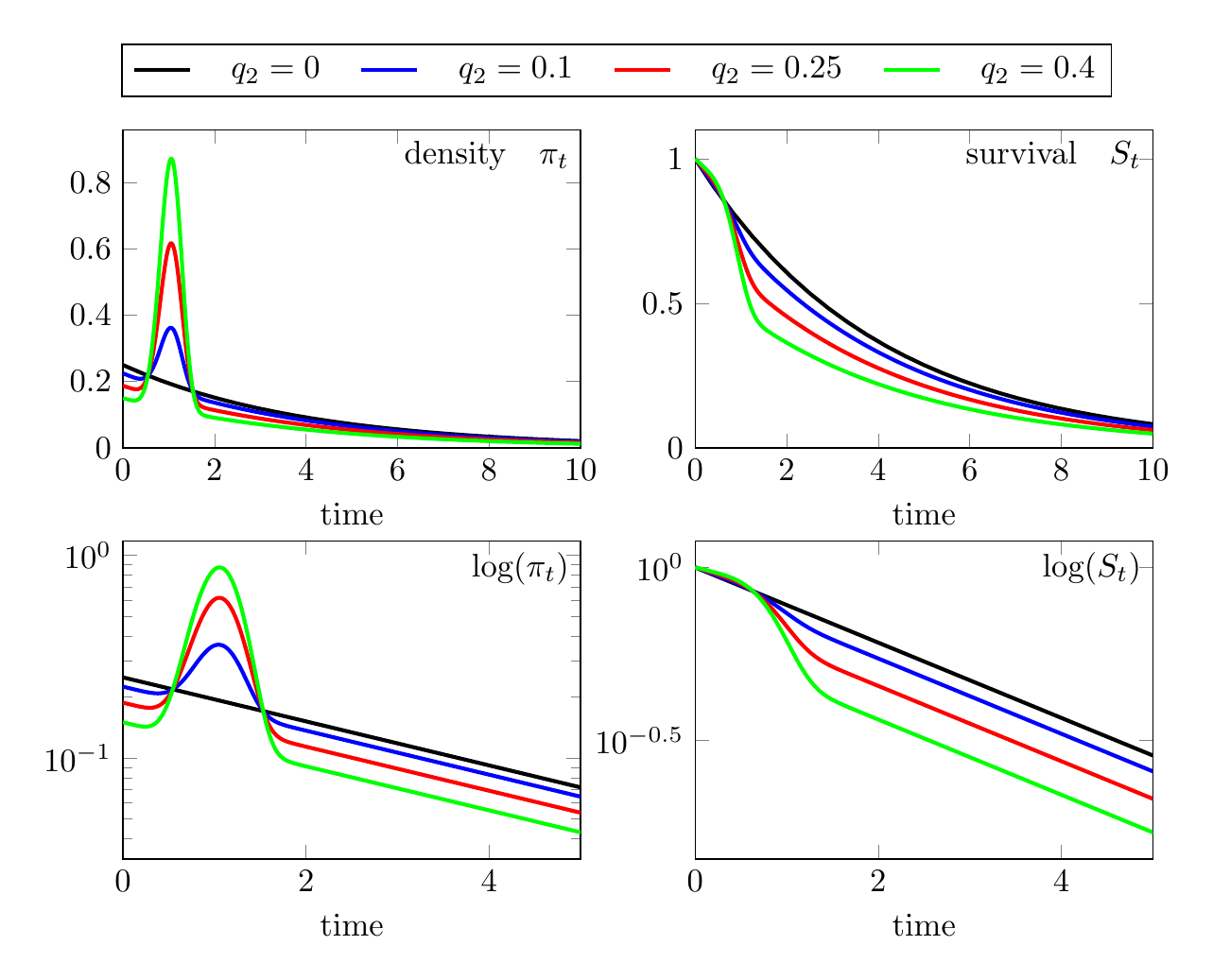}
    \caption{\textbf{Admixture test distributions} used in simulated investigations of our estimator. Populations formed using $q_2\in[0,1)$ admixtures of $(1-q_2)\textrm{exponential}(\lambda=5^{-1})$ and $q_2\textrm{Weibull}(k=5,\lambda=1)$ event time distributions. Event time density functions $\pi_t$ and corresponding survival functions  $S_t$ are shown for various values of $q_2$.
    }
    \label{fig:fig2}
    \end{figure}
    As test populations, we examined admixtures of exponential and Weibull
    distributions for the event time, and compared survival in these mixture populations
    to survival of a population of purely exponential event times (Fig.~\ref{fig:fig2}).
        Population 1 consists of individuals having an 
    exponentially distributed lifetime with a mean of $\lambda^{-1}=4$ years. Population 
    2 consists of two types of individuals: those who have an exponentially 
    distributed lifetime with a mean of $5$ years (type $z=1$), and those of type $z=2$ who have 
    a Weibull distributed lifetime with shape parameter $k=5$ and scale parameter 
    $\lambda=1$.
    
     Since Population $1$ is homogeneous, we only track subpopulations of Population $2$ - we drop the
    superscript and denote the proportion of Population 2's members of type $2$ by $q_2$. 
    It is most instructive to examine our method in the neighborhood where both populations
     have approximately the same expectation value for the event time, which occurs for $q_2\approx 0.245$.
    For this reason, we chose values near $0.25$ for our simulations.
    
        \begin{figure*}
    \includegraphics[width=\linewidth]{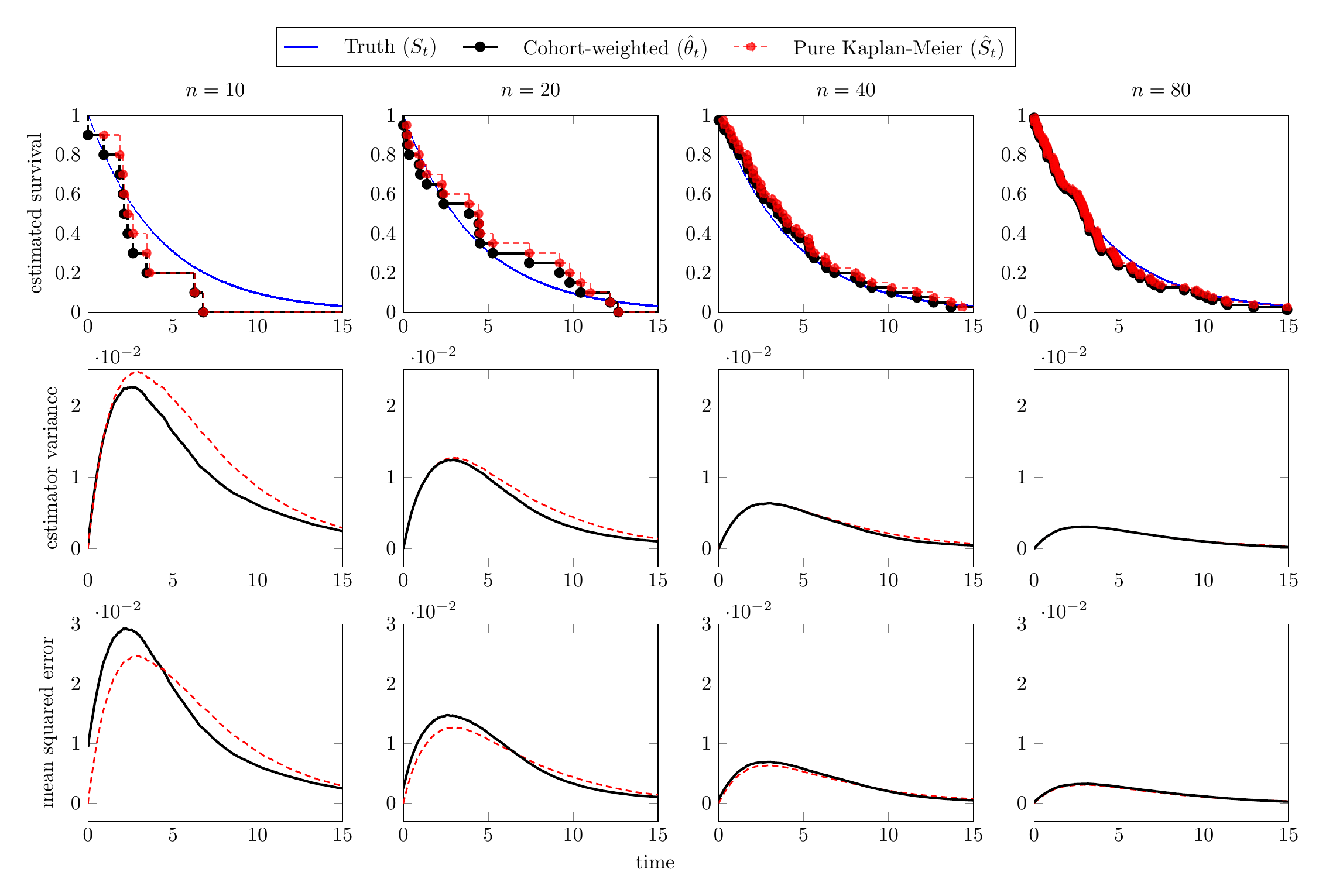}
    \caption{\textbf{Comparing estimators of survival.} The survival estimation method of Eq.~\ref{eq:thetat} compared to pure Kaplan-Meier for a population containing an admixture of $(1-q_2)\mathrm{Exponential}(1/5)$ and $q_2\mathrm{Weibull}(1,5)$ individuals, where $q_2=0.25$. At a given sample size $n$, the survival estimates are obtained (top row: examples shown and contrasted). The estimator variance and  mean square error were approximated using $10,000$ resamplings for each of the sample sizes.}
    \label{fig:fig3}
    \end{figure*}
    
    To compare the reweighted Kaplan-Meier estimator (Eq.~\ref{eq:thetat}) to the standard Kaplan-Meier estimator, we estimated survival for the admixed population for $q_2=0.25,$ using various sample sizes. In Fig.~\ref{fig:fig3}, we present example reconstructions using these two methods. The estimator variance was approximated using $10,000$ resamplings of sample size $n$ of the admixed population, for each value of $n$. The estimation error, as defined by mean-squared difference between the reconstruction and the true survival function, was approximated in the same manner. 
        
    To better-understand the performance of the test statistic (Eq.~\ref{eq:Theta}), we evaluated its statistical power
    against that of other test statistics in distinguishing between Population 1 and Population 2 for various values of $q_2.$ For samples of size $n^{(i)}\in\set{30,50,100,200,1000}$ taken from each population, we performed $1000$ 
    null hypothesis statistical tests using our method, the log-rank method~\cite{berty2010determining}, and 
    the standard Kaplan-Meier Wilcoxon signed-rank difference-of-mean methods~\cite{wilcoxon1945individual,schoenfeld1981asymptotic}. The power of the test, or the proportion of times that the null hypothesis was correctly rejected, is shown in Fig.~\ref{fig:fig4}.

    \begin{figure*}
    \includegraphics[width=\linewidth]{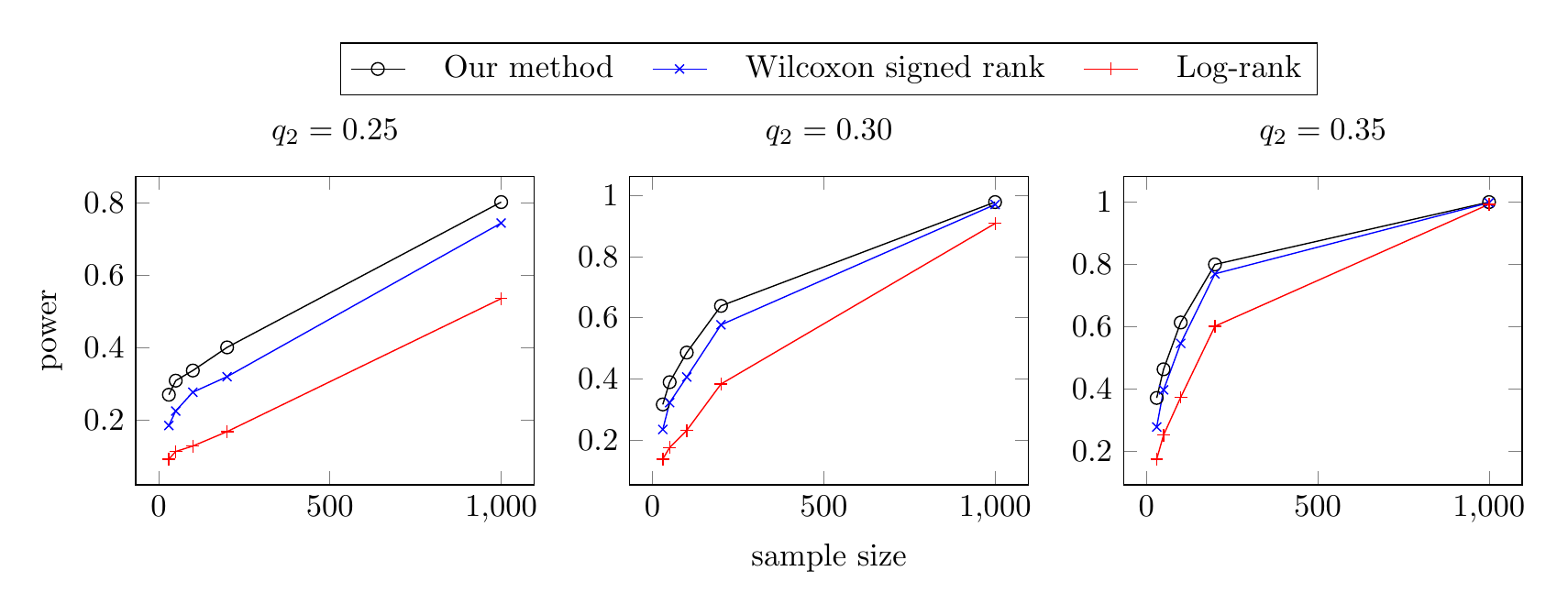}
    \caption{ \textbf{Simulated power computation} comparing exponentially distributed lifetimes against a mixture of $q_2$ Weibull and $(1-q_2)$ exponential distributions, where $q_2$ determines the amount of mixing. A larger value of $q_2$ implies more real difference between the survival functions of the two populations. The power of our method (black) is compared to the power of Kaplan-Meier Wilcoxon signed rank (blue) and Log-rank (red) methods. (More power is better). }
    \label{fig:fig4}
    \end{figure*}

\subsection{Evaluating Type-I error in a real world example}

  \begin{figure*}\center
    \includegraphics[width=0.6\linewidth]{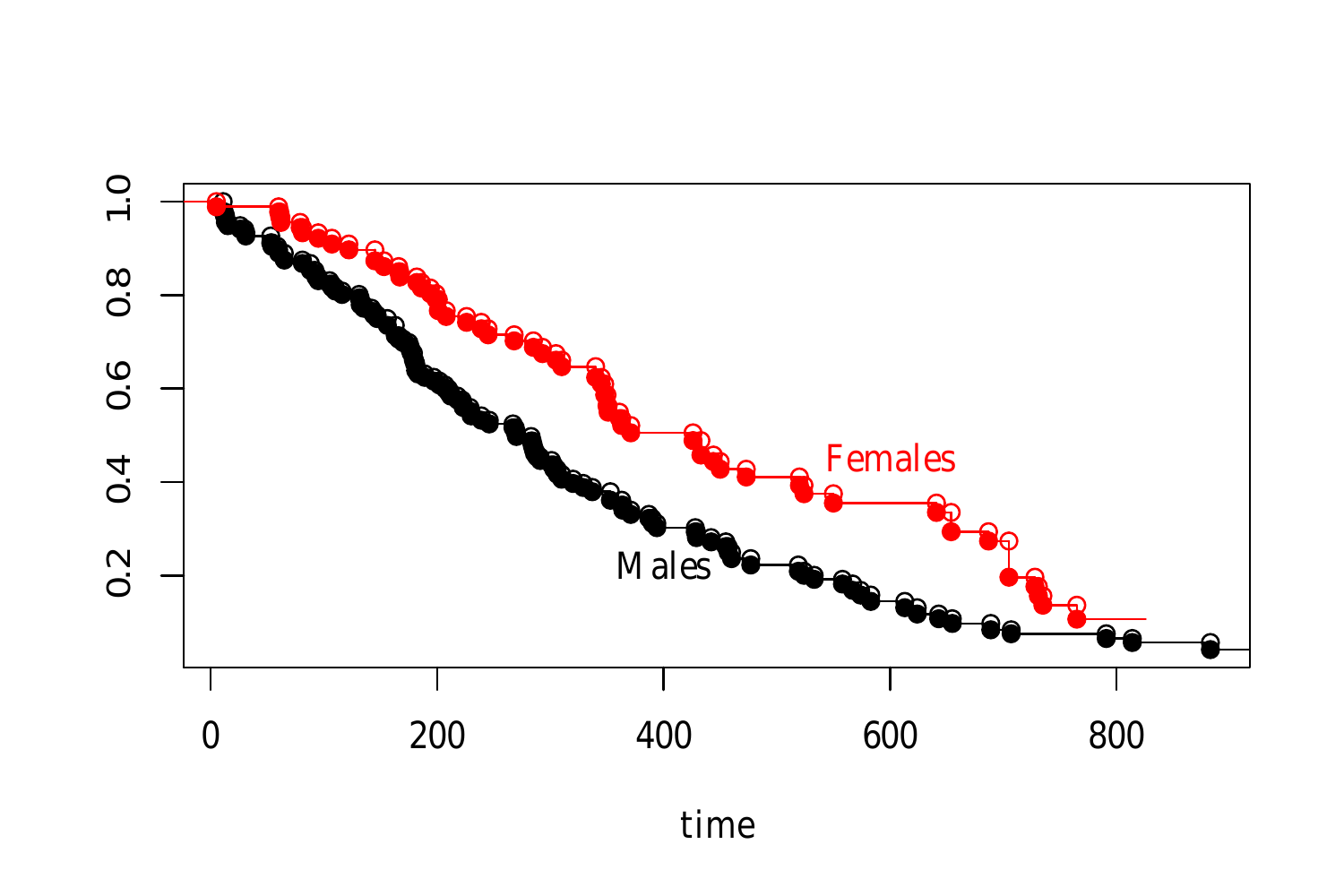}
    \caption{ \textbf{$\hat\theta_t$ estimates} for days of lung cancer survival in males (population 1) versus females (population 2) from the NCCTG lung cancer dataset. The statistic $\hat\Theta$ implies an asymptotic $P$-value of $0.0009$, rejecting $H_0$ at $\alpha=0.05$.}
    \label{fig:fig5}
    \end{figure*}

    \begin{figure*}
    \includegraphics[width=\linewidth]{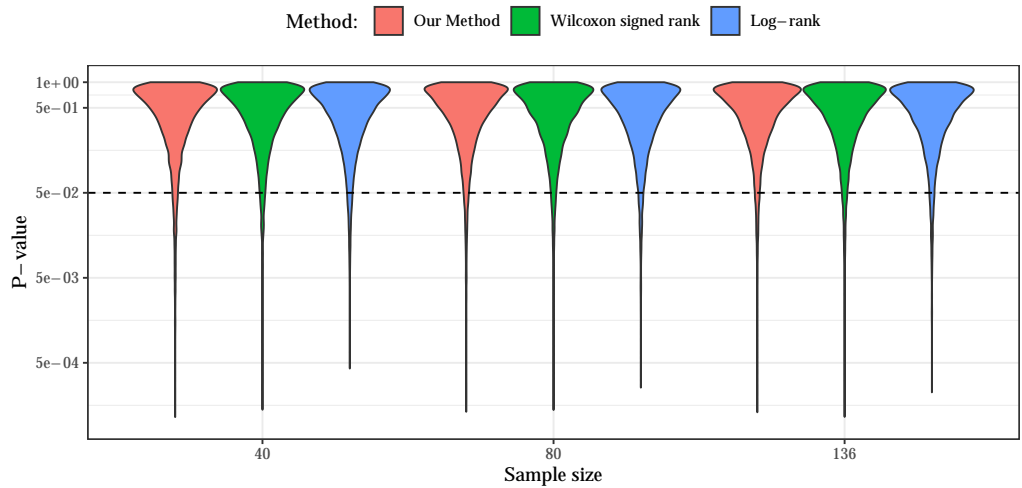}
    \caption{ \textbf{$P$-value distributions} for the comparison between samples of size $n/2$ of two random subpopulations of male patients in the lung cancer data. The proportion of null hypotheses rejected
    at each of the three statistical methods is similar, at approximately $5\%$ for $\alpha=0.05$.}
    \label{fig:fig6}
    \end{figure*}

We applied the survival estimator and statistic to NCCTG Lung Cancer data  \cite{loprinzi1994prospective} available 
within the \texttt{survival} package for R. We compared the survival between male ($n^{(1)}=136$) and 
female ($n^{(2)}=90$) cancer patients, organized by ECOG performance score ($z\in\{0,1,2 \})$ as cohort. 
Using males as population 1 and females as population 2, we arrived at the test-statistic estimate: 
$\hat\Theta = -961$, with 95\% asymptotic confidence interval: $(-1527,-396)$, which
would support rejection ($P\approx 0.0009$) of the null hypothesis ($\hat{\theta}_t^{(1)}=\hat{\theta}_t^{(2)}$) at $\alpha=0.05$. 
For reference, both the Wilcoxon ($P\approx 0.0012$) and log-rank ($P\approx 0.0015$) tests referenced in
Fig.~\ref{fig:fig4} also rejected the null hypothesis.

In theory, the Type-I error is set by the significance level at study design. Whether a statistic controls Type-I error
correctly depends on accurate evaluation of its sampling distribution. In the case of $\hat\Theta$, our main result is that the sampling distribution for this estimator converges asymptotically in distribution to a Gaussian with a definite variance. However,
small-sample behavior is not guaranteed. 
To evaluate Type-I error, we used the same dataset, restricted to male patients. For each of $n\in\{ 40, 80, 136 \}$, we 
sampled without replacement the $n$ male patients split into two groups so that $n^{(1)} = n^{(2)} = n/2$, and
compared survival between the two random groups. Repeating this procedure $10,000$ times, we generated the
observed distribution of $P$-values, presented in Fig.~\ref{fig:fig6} in log-scale. The distributions computed using
the three methods are similar. The three methods all rejected $H_0$ approximately $5\%$ of the time except for 
the case of $\hat{\Theta}$ at $n=40$, which rejected $H_0$ approximately $6\%$ of the time. Essentially, 
asymptotic convergence as defined by the accurate evaluation of $\alpha=0.05$ Type-I error occurs somewhere in between $40$ and $80$ samples for this particular dataset.

Probing deeper, we examined the sampling distributions of $\hat{\Theta}$ for each of $n\in\{50,60,70\}$, in each
 instance compared to the Gaussian distribution stated in Thm.~\ref{thm:main}, where the approximation is computed
 using only the first sample of size $n$. The results for these simulations are shown in Fig.~\ref{fig:fig7}, where it
 is seen that the sampling distribution of $\hat{\Theta}$ is approximately the same as the computed asymptotic Gaussian distribution, which is traced out in red.
 
     \begin{figure*}
    \includegraphics[width=\linewidth]{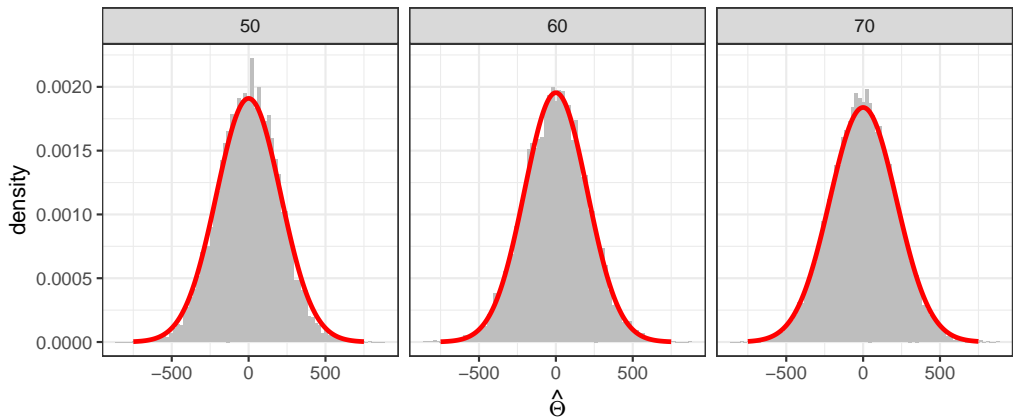}
    \caption{ \textbf{Histograms of $\hat{\Theta}$ sampling distributions} for comparing survival between random subsets of male lung cancer patients using sample sizes of $n\in\{50,60,70\}$. Traced in red, the asymptotic Gaussian density as computed using Thm.~\ref{thm:main} on the first sample set of each size is overlayed.}
    \label{fig:fig7}
    \end{figure*}

The R code used to compute these examples
 is available in Appendix B\ref{sec:realworld}. 

  \end{section}
  
  \section{Discussion and Conclusion}

In this manuscript we have proposed a test statistic that uses a cohort-averaged
survival function estimator in order to make cross-population comparisons
of survival within a null hypothesis statistical testing framework. The proposed survival estimator 
was an empirically-weighted average of cohort-level product-limit estimates. The test statistic
involved computation of the area between estimated survival functions for two populations. By
invoking an empirical stochastic process, we proved asymptotic normality of this test statistic.

Using simulations, we contrasted the weighted survival estimator against the pure Kaplan-Meier 
estimator. It is seen, in Fig.~\ref{fig:fig3}, that the survival curves generated from the two methods are
distinct yet similar. In the second and third rows of Fig.~\ref{fig:fig3}, one sees 
 that this reweighted estimator has comparable performance to
the pure Kaplan-Meier estimator at large sample sizes. Asymptotically, both estimators converge to the 
true survival function, with variance converging to zero. At small sample sizes, there are differences. 
The reweighted estimator has reduced variance at the cost of larger bias, in a time-dependent manner.
It also appears to have smaller variance at the cost of larger error at earlier times. This error
at earlier times is mitigated by decreased error at later times (better reconstruction of tails), however, the estimator variance is lower
at all times. Hence, dependent on costs,  
for small samples, this reweighted estimator may be preferable to the pure Kaplan-Meier estimator.

In simulations of the test statistic derived from the reweighted survival estimator, we saw 
superior performance compared to existing methods. In Fig.~\ref{fig:fig4}, it is seen that in all cases,
 the test statistic $\hat{\Theta}$ was better at distinguishing between the two populations than 
either the Wilcoxon signed-rank test or the log-rank test. The relatively-high statistical power of this
 statistic is due to tighter variation in the test-statistic. In nearly all cases $(>99.5\%)$, the estimator
variance for the tested method was less than that of the other two tests (not shown).

This manuscript derives the asymptotic convergence in distribution of the $\hat\Theta$ statistic. 
Numerically, we demonstrated convergence of the statistic in Figs.~\ref{fig:fig6} and~\ref{fig:fig7}, where we
verified that the asymptotic approximation respects Type-I error at $\alpha=0.05$ and where
we observe good match between the sampling distribution of $\hat\Theta$ and the asymptotic
Gaussian distribution provided by Thm.~\ref{thm:main}.

A variant of this method was used in \citet{rasch2014first}
in order to classify physical disorders based on severity for the sake of prioritization of processing
for disability claims. Since the underlying survival surface is non-stationary, and the fixed observation
windows create progressive censoring, that paper illustrates the utility of this statistical method.
In that manuscript, the cohorts were defined based on binned
application entry times and a heuristic ``survival surface'' was generated in
order to get a single overall picture of the survivability of a given disorder. The 
censoring parameters $\tau_z$ varied due to the finite sampling window and 
the fact that more-recent cohorts are not observed for as long a time period
as older cohorts, as depicted in Fig~\ref{fig:fig1}b.
It was also expected that survival by cohort would vary due to differences
in health care administration and treatment between entry cohorts. The
use of the empirical prevalences ($\hat{q}_z$) allowed the accounting for 
variability in disability application volume by sufferers of given disorders, conditional on entry date.

We note that a strong limitation of the presented method lies in its framing in terms of null hypothesis
statistical testing. The $\hat\Theta$ statistic only provides a $P$-value, as opposed to other tests such
as the log-rank test which provide hazard ratios as well. As a trade-off for statistical power, one
is sacrificing interpretability in the form of effect sizes.

 Although the most
direct and natural applications of the method that we have presented here involve
discretely-indexed covariates, it is possible to use this method for continuously-indexed covariates
such as time by employing the binning strategy used in \citet{rasch2014first}. This approach is
particularly fruitful if the sampling windows are coarse and there is clear separation between
cohorts to maintain statistical independence. In this situation, it may be unreasonable to expect to
construct a full continuous surface for survival. Nonetheless, a possible future extension of this
method might involve replacing the sum of Eq.~\ref{eq:conditioningsum} with an integral and 
using statistical regularization tools~\cite{chang2014path} in order to infer true continuously-indexed
surfaces.

\ethics{This section does not apply.}

\dataccess{All data in this manuscript is simulated, with R source code provided in Appendix~\ref{s:datasim}.}

\competing{The authors declare no competing interests.}

\disclaimer{This section does not apply.}

\aucontribute{A.H. and M.H. developed the statistical method. A.H., M.H, and J.C.C. wrote the proof. A.H., M.H, and J.C.C. performed the simulations. J.C.C generated the figures. A.H., M.H, and J.C.C. wrote the manuscript. All authors gave final approval for publication.}

\funding{This work is supported by the Intramural Research Program of the National Institutes of Health Clinical Center and the US Social Security Administration. }
\ack{
The authors would like to thank Dr. Leighton Chan and Dr. Elizabeth Rasch for insightful discussions, guidance and support, Dr. Pei-Shu Ho for help obtaining data.}

  \bibliography{master}
    
    \clearpage

    \newpage
    
  \appendix

\setcounter{secnumdepth}{4}

\section{Proof of the main theorem}\label{s:mainproof}
\setcounter{thm}{0}

To prove the main theorem, we use an empirical process modeling framework to develop the asymptotic
properties of first deterministically proportionally-weighted Kaplan-Meier estimators. We then replace the deterministic proportions with estimates given by the sample prevalences of the cohorts. Here, we 
restate the main theorem and prove it through a series of lemmata.

   \begin{thm}
      \label{thm:main}
      Let $C^{(i)}_{z, t}$ denote the probability that 
      a $z$-type individual has not yet been censored at time $t \geq 0$ (the survival probability
      relative to the occurrence of censoring), and 
      $q^{(i)}_z$ denote the probability that an individual in population $i$ is of cohort 
      $z$, and let $p^{(i)} = n^{(i)}/(n^{(1)}+n^{(2)}).$ Suppose that $\theta^{(1)}_t = \theta^{(2)}_t.$ Then $\hat{\Theta}\xrightarrow{d} N(0,\sigma^{2})$, as $n^{(i)} \to \infty$, with
      \begin{align*}
        \sigma^{2} &= \sum_{i = 1}^{2} (1-p^{(i)})\innp{\sum_{z = 1}^d q^{(i)}_z \phi^{2}_z 
        - \innp{\sum_{z = 1}^d q^{(i)}_z \phi_z}^{2}} \nn
          &\quad- \sum_{z = 1}^d \int_0^{\tau_z}\!\d{S}_{z, t}\,
          W_{z, t}\times\innp{\frac{\phi_{z, t}}{S_{z, t}}}^{2},
      \end{align*}
      where for $0 \leq t \wedge \tau_z$, where $\tau_z$ is the time at which samples of cohort $z$ are censored,
       $\phi_{z, 
      t} = \int_t^{\tau_z}\!\d{s}\, S_{z, s}$, $\phi_z \equiv \phi_{z, 
      0}$, $S_{z,t}$ is the survival function for the pooled data of cohort $z$, and
      \begin{align*}
        W_{z, t} =  \innp{\frac{p^{(1)} C^{(1)}_{z, t-} q^{(2)}_z + p^{(2)} C^{(2)}_{z, t-} 
        q^{(1)}_z}{C^{(1)}_{z, t-} C^{(2)}_{z, t-}}}.
      \end{align*}
      The variance $\sigma^{2}$ may be consistently estimated by
      \begin{align}
        \hat{\sigma}^{2} &= \sum_{i = 1}^{2} (1-p^{(i)})\innp{\sum_{z = 1}^d 
        \hat{q}^{(i)}_z \hat{\phi}_z^{2} - \innp{\sum_{z = 1}^d \hat{q}^{(i)}_z 
      \hat{\phi}_z}^{2}} \nn
        &\quad- \sum_{z = 1}^d \int_0^{\tau_z}\!\d{\hat{S}}_{z, t}\,
        \hat{W}_{z, t}\times\innp{\frac{\hat{\phi}_{z, t}}{\hat{S}_{z, t}}}^{2},
                \label{eqn:sigma2}
      \end{align}
      where for  $0 \leq t \wedge \tau_z$, 
      $\hat{S}_{z, t}$ is the product-limit estimate of the pooled data for cohort $z$, 
            \begin{equation}
      \hat{\phi}_{z, t} = \int_t^{\tau_z}\d{s}\, \hat{S}_{z, s},
      \end{equation}
      $\hat{C}^{(i)}_{z, t}$ is the product-limit estimate associated to the event 
      of censoring for cohort $z$ within population $i$, 
       $ \hat{\phi}_z \equiv \hat{\phi}_{z, 
      0},$ and
      \begin{align}
        \hat{W}_{z, t} =  \innp{\frac{p^{(1)} \hat{C}^{(1)}_{z, t-} \hat{q}^{(2)}_z 
          + p^{(2)} \hat{C}^{(2)}_{z, t-} \hat{q}^{(1)}_z}{\hat{C}^{(1)}_{z, t-} 
        \hat{C}^{(2)}_{z, t-}}}. \label{eq:What}
      \end{align}
    \end{thm}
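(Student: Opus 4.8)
The plan is to isolate, after linearising $\hat\Theta$ around its value under the null, two contributions that are asymptotically independent and individually asymptotically Gaussian: a \emph{prevalence part} carrying the sampling error of the $\hat q^{(i)}_z$, and a \emph{product-limit part} carrying the error of the cohortwise Kaplan--Meier processes $\hat S^{(i)}_{z,\cdot}$. The device that makes this clean --- and that circumvents the dependence between the weights $\hat q$ and the event-time estimates $\hat S$ which the introduction flags as the obstruction to a Pepe--Fleming-type argument --- is to condition on the cohort-count vector $(n^{(i)}_z)_{z,i}$. Given these counts, each $\hat S^{(i)}_{z,\cdot}$ is the product-limit estimator of its own block of $n^{(i)}_z$ i.i.d.\ right-censored lifetimes, so the $\hat S^{(i)}_{z,\cdot}$ are mutually independent over $z$ and $i$ and independent of the now-deterministic weights; meanwhile $\hat q^{(i)}\to q^{(i)}$ almost surely and $\sqrt{n^{(i)}}\bigl(\hat q^{(i)}-q^{(i)}\bigr)\xrightarrow{d}N\bigl(0,\diag(q^{(i)})-q^{(i)}(q^{(i)})^{\top}\bigr)$ by the multinomial CLT, the two populations being independent. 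I take the null in the form in which the cohortwise survival functions coincide across populations and write $S_{z,t}$ for the common function --- the object the pooled quantities in the statement estimate --- and set $K=n^{(1)}n^{(2)}/(n^{(1)}+n^{(2)})$, so that $K/n^{(i)}=1-p^{(i)}=p^{(j)}$ for $j\neq i$.

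First I would linearise: writing $\hat\theta^{(i)}_t-\theta^{(i)}_t=\sum_z(\hat q^{(i)}_z-q^{(i)}_z)S_{z,t}+\sum_z q^{(i)}_z(\hat S^{(i)}_{z,t}-S_{z,t})+R^{(i)}_t$ with $R^{(i)}_t$ a product of two $O_P(n^{-1/2})$ factors, integrating in $t$, using $\int(\theta^{(1)}_t-\theta^{(2)}_t)\,\d{t}=0$ under the null, and multiplying by $\sqrt K$, one gets $\hat\Theta=A_n+B_n+o_P(1)$ with
\[
A_n=\sqrt K\sum_z\bigl[(\hat q^{(1)}_z-q^{(1)}_z)-(\hat q^{(2)}_z-q^{(2)}_z)\bigr]\phi_z,\qquad B_n=\sqrt K\sum_z\bigl(q^{(1)}_z I^{(1)}_z-q^{(2)}_z I^{(2)}_z\bigr),
\]
where $I^{(i)}_z=\int_0^{\tau_z}(\hat S^{(i)}_{z,t}-S_{z,t})\,\d{t}$. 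The $i$-summand of $A_n$ equals $\sqrt{K/n^{(i)}}$ times $\sqrt{n^{(i)}}\sum_z\phi_z(\hat q^{(i)}_z-q^{(i)}_z)$, which by the multinomial CLT converges to a centred Gaussian with variance $\sum_z q^{(i)}_z\phi_z^2-\bigl(\sum_z q^{(i)}_z\phi_z\bigr)^2$; since $K/n^{(i)}=1-p^{(i)}$ and the populations are independent,
\[
A_n\xrightarrow{d}N\Bigl(0,\ \sum_{i=1}^2(1-p^{(i)})\bigl(\sum_z q^{(i)}_z\phi_z^2-(\sum_z q^{(i)}_z\phi_z)^2\bigr)\Bigr),
\]
the first term of $\sigma^2$.

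Next I would treat $B_n$ through the martingale/empirical-process representation of the product-limit estimator \cite{gill1980censoring,gill1981large,shorack2009empirical}: uniformly on $[0,\tau_z]$, $\sqrt{n^{(i)}_z}(\hat S^{(i)}_{z,t}-S_{z,t})=-S_{z,t}\int_0^{t}\frac{\d{M}^{(i)}_{z,s}}{y^{(i)}_{z,s}}+o_P(1)$, where $M^{(i)}_z$ is the counting-process martingale and $y^{(i)}_{z,s}=Y^{(i)}_{z,s}/n^{(i)}_z\to S_{z,s-}C^{(i)}_{z,s-}$. Integrating in $t$ and applying Fubini,
\[
\sqrt{n^{(i)}_z}\,I^{(i)}_z=-\int_0^{\tau_z}\phi_{z,s}\,\frac{\d{M}^{(i)}_{z,s}}{y^{(i)}_{z,s}}+o_P(1),
\]
a stochastic integral whose predictable variation converges, via $\d{\langle M^{(i)}_z\rangle}=Y^{(i)}_z\,\d{\Lambda}_z$ and $\d{S}_z=-S_{z-}\,\d{\Lambda}_z$, to $\int_0^{\tau_z}(\phi_{z,s}/S_{z,s})^2(-\d{S}_{z,s})/C^{(i)}_{z,s-}$. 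A martingale CLT plus the conditional independence across $(i,z)$ gives conditional normality of $B_n$; substituting $n^{(i)}_z/n^{(i)}\to q^{(i)}_z$ and $K/n^{(i)}=1-p^{(i)}$, the cohort-$z$ share of the limiting variance of $B_n$ is
\[
p^{(2)}q^{(1)}_z\!\int_0^{\tau_z}\!\frac{(\phi_{z,s}/S_{z,s})^2(-\d{S}_{z,s})}{C^{(1)}_{z,s-}}+p^{(1)}q^{(2)}_z\!\int_0^{\tau_z}\!\frac{(\phi_{z,s}/S_{z,s})^2(-\d{S}_{z,s})}{C^{(2)}_{z,s-}},
\]
which, after combining the two integrands over the common denominator $C^{(1)}_{z,s-}C^{(2)}_{z,s-}$, is exactly $-\int_0^{\tau_z}\d{S}_{z,t}\,W_{z,t}(\phi_{z,t}/S_{z,t})^2$ with $W_{z,t}$ as in the statement; summing over $z$ gives the second term of $\sigma^2$. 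Finally, $A_n$ is a measurable function of the cohort counts whereas, conditionally on those counts, $B_n$ is asymptotically $N(0,\sigma_B^2)$ with a limiting variance depending on the counts only through their probability limits; a characteristic-function (Slutsky) argument then upgrades this to joint convergence of $(A_n,B_n)$ to independent Gaussians, so $\hat\Theta\xrightarrow{d}N(0,\sigma^2)$.

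Consistency of $\hat\sigma^2$ is then a routine plug-in argument: $\hat S_{z,\cdot}$ (pooled cohort) and $\hat C^{(i)}_{z,\cdot}$ are uniformly strongly consistent on $[0,\tau_z]$, on which all integrands are bounded and bounded away from degeneracy because $C^{(i)}_{z,\tau_z}>0$; $\hat q^{(i)}_z\to q^{(i)}_z$ almost surely; and the functionals $F\mapsto\int_t^{\tau_z}F$ and $(F,g)\mapsto\int_0^{\tau_z}g\,\d{F}$ are continuous in the relevant sup-norm/Skorokhod topology, so each term of \eqref{eqn:sigma2} converges almost surely to its population analogue. The step I expect to be the real work is making the conditioning device rigorous --- showing that the conditional martingale CLT for $B_n$ holds \emph{with the same limiting variance} for every count vector in a shrinking neighbourhood of $(n^{(i)}q^{(i)}_z)_{z,i}$, so that the prevalence randomness may legitimately be ``frozen'' when analysing $B_n$ while still being retained in $A_n$; once that, together with the uniform linearisation and the Fubini interchange up to the censoring horizon, is in place, the remainder is bookkeeping.
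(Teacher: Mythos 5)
Your proposal is correct in outline and its skeleton coincides with the paper's: the same three-way split of $\hat{q}_z\hat{S}_{z,\cdot}-q_zS_{z,\cdot}$ into a negligible cross term, a multinomial-CLT prevalence term, and a martingale/product-limit term (this is exactly the decomposition of $\hat{Z}_{z,t}$ in Eq.~\ref{eqn:Z_n_z}), the same variance bookkeeping in which the two populations are combined with weights $p^{(2)},p^{(1)}$ to produce $W_{z,t}$, and essentially the same plug-in argument for $\hat{\sigma}^2$ (the paper cites Gill's theorem 4.2.2 plus Glivenko--Cantelli). Where you genuinely diverge is the device used to decouple the random prevalences from the Kaplan--Meier processes whose sample sizes are those very prevalences. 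You condition on the cohort counts, invoke a conditional martingale CLT whose limiting variance depends on the counts only through their limits, and then de-condition by a characteristic-function/Slutsky argument; the paper instead studies processes $\hat{\zeta}^2(a)$ indexed by a \emph{deterministic} proportion vector $a$ in a neighbourhood $V$ of $q$, obtains convergence uniformly in $a$ through a Skorohod representation of Gill's weak-convergence theorem (Lemma~\ref{lem:conv_zeta_parts}), proves continuity of the limit in $a$ via Doob's inequality and Kolmogorov's continuity criterion (Lemma~\ref{lem:cont_a}), and only then substitutes $a=\mathbf{n}/n$ by the continuous mapping theorem (Theorem~\ref{thm:conv_total}). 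The two devices buy the same thing, and your conditioning route is arguably more intuitive; but the step you yourself flag as ``the real work''---that the conditional CLT holds with the same limit uniformly over count vectors near $(n q^{(i)}_z)$, noting that the counts enter as the sample sizes of the product-limit estimators so the conditional law changes with $n$---is precisely what the paper's uniform-in-$a$ machinery exists to supply, and your remainder/cross-term control corresponds to the Lenglart-inequality argument of Lemma~\ref{lem:conv_Z_first_term}, which gives the needed uniformity in $t$. One caveat you share with the paper rather than introduce: your variance derivation takes the cohortwise survival functions to be common across the two populations (the pooled $S_{z,t}$ of the statement), which is stronger than the literal null hypothesis $\theta^{(1)}_t=\theta^{(2)}_t$; the theorem's variance formula implicitly makes the same identification, so this is a faithful reading rather than a gap.
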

    
    \begin{proof}[Overview of Proof of Theorem~\ref{thm:main}]
    To prove the main theorem, we turn to the modeling framework that we present in ~\ref{s:model}.
    In general, we proceed by first assuming fixed sample proportions and then extending
    results to random proportions as given by empirical prevalence (Eq~\ref{eq:q_z}).
    The convergence of $\hat{\Theta}$ follows directly from corollary~\ref{cor:conv_sum} and Eq.~\ref{eq:Thetadist}.
     The consistency 
      of $\hat\sigma^2$ follows from theorem 4.2.2 of \cite{gill1980censoring}, which provides for
      weak convergence of the product limit estimator to a Gaussian process, and the 
      Glivenko-Cantelli theorem.

    \end{proof}

  \subsection{Preliminaries and Notation}
    \label{s:prelim}
    
    Given any pair of random elements $X, Y$, we denote equality in 
    a distributional sense by $X \approx Y$. Let $\b{P}$ be a probability 
    measure on the measurable space $\innp{X,\s{A}}$. The empirical measure 
    generated by the sample of random elements $x_1,\dots,x_n$, $n\in\b{N}$ is 
    given by
    \begin{align}
      \label{eqn:empirical_zeasure}
      \b{P}_n
      & = n^{-1}\sum_{i=1}^n\delta_{x_i},
    \end{align}
    where for any $x\in X$, and any $A\in\s{A}$,
    \begin{align}
      \label{eqn:dirac_zeasure}
      \delta_x(A)
      & = \left\{
      \begin{array}{ll}
        1, & x\in A,\\
        0, & x\notin A.
      \end{array}
      \right.
    \end{align}
    Note that alternatively, when needed, one may write 
    $\delta_x(A)$ as the indicator function $1_A(x)$ on the set $A$. 
    Furthermore, in the case that $A=\set{k}$, $k\in\b{Z}$, and 
    $x\in\b{Z}$, we write $\delta_x(A)\equiv\delta_{x,k}$.

    Given $\s{H}$, a class of measurable functions $h:X\to\b{R}$, the empirical 
    measure generates the map $\s{H}\to\b{R}$ given by $h\mapsto \b{P}_n h$,
    where for any signed measure $Q$ and measurable function $h$, we use the 
    notation $Qh=\int\!\d{Q}\,h$. Furthermore, define the $\s{H}$-indexed 
    empirical process $\s{G}_n$ by
    \begin{align}
      \label{eqn:empirical_process}
      \s{G}_n h
      & = \sqrt{n}\innp{\b{P}_n-\b{P}}{h} 
      = \frac{1}{\sqrt{n}}\sum_{i=1}^n\innp{h(x_i)-\b{P}h},
    \end{align}
    and with the empirical process, identify the signed measure 
    $\s{G}_n=n^{-1/2}\sum_{i=1}^n\innp{\delta_{x_i}-\b{P}}$. 

    Note that for a measurable function $h$, from the law of large numbers and 
    the central limit theorem, it follows that $\b{P}_n h\displaystyle\toas\b{P}h$, and 
    $\s{G}_n h\todistribution N{\innp{0,\b{P}\innp{h-\b{P}h}^{2}}}$,
    provided $\b{P}h$ exists and $\b{P}h^{2}<\infty$, and where 
    ``$\todistribution$'' denotes convergence in distribution. In addition to 
    the preceding notation, given the elements $f$, and $f_n$, 
    $n\in\b{N}$, we also denote respectively, convergence in probability and in 
    distribution, of $f_n$ to $f$, by $f_n\toprobability f$.

    For any map $x:\s{H}\to\b{R}^k$, $k\in\b{N}$, define the uniform norm 
    $\norm{x}_{\s{H}}$ by
    \begin{align}
      \label{eqn:uniform_norm}
      \norm{x}_{\s{H}}
      & = \sup{\set{\abs{x(h)}:h\in\s{H}}},
    \end{align}
    and in the case that $\s{H}\subset\b{R}$, write 
    $\norm{\cdot}_\s{H}\equiv\snorm{\cdot}$. 
    A class $\s{H}$ for which $\norm{\b{P}_n-\b{P}}_{\s{H}}\to0$ is called 
    a $\b{P}$-Glivenko-Cantelli class. Denote by $\l^{\infty}(\s{H})$ the class 
    of uniformly bounded functions on $\s{H}$. That is, for a general 
    $k\in\b{N}$, \begin{align*}
      \l^\infty(\s{H}) 
      & = \set{x:\s{H}\to\b{R}^k : \norm{x}_{\s{H}} < \infty}.
    \end{align*}
    If for some tight Borel measurable element $\s{G}\in\l^\infty(\s{H})$, 
    $\s{G}_n\todistribution\s{G}$, in $\l^\infty(\s{H})$, we say that $\s{H}$ is 
    a $\b{P}$-Donsker class.

  \subsection{Empirical process framework}
    \label{s:model}
 
    To prove Theorem~\ref{thm:main}, we turn to an empirical modeling framework that
    will provide us the asymptotic statistics of the weighted product limit estimator.    
    Consider a closed particle-system, such that according to a predefined set 
    of characteristics, the system can be subdivided into mutually exclusive 
    subsystems. 
    
    Each particle corresponds to the observed state of a particular individual in a fixed population cohort.
     Note that we will restrict this discussion to only a single population
    of particles. These arguments will extend to multiple populations as
    mentioned in this manuscript by treating separate populations as 
    independent.
    
     At any given time $t\geq 0$, each particle will have exactly one 
    associated state $x$ in the set $\b{Z}_4$, referring 
    respectively to states of 
\begin{equation}
\begin{matrix} 
    0& \textrm{dormancy} \\
    1 & \textrm{activity}\\
    2 & \textrm{inactivity}\\
    3 & \textrm{censored}.
\end{matrix}
\end{equation}
    Assume that the path of any particle is statistically dependent upon its
    particular subsystem, and that given the respective subsystems of any two 
    particles, their resulting paths are statistically independent. Assume further that at
    a reference time $t=0$, all particles enter into the active state $(x=1),$
    and that particles are considered dormant for all $t<0$.

    Let $d \in \b{N}$ and $\tau \in (0, \infty)$ be fixed. We will assume the 
    existence of a collection of individuals $\Gamma$, assumed to be infinite in 
    size, where each individual $\gamma \in \Gamma$ exhibits a c\`adl\`ag 
    path-valued state $x_t^\gamma$, for $t \geq 0$. For 
    each $\gamma \in \Gamma$, $x^\gamma_t$ is determined by the 
    individuals particle type $z^\gamma$ and a random jump time $\xi^\gamma$. 
    The particle type $z^\gamma$ is distributed in the population through
    the probability mass $\b{P}(z^\gamma= z) = q_z$, where $\mathbf{q} = (q_1, \dots, q_d) \in (0, 1)^d$ satisfies 
    $\sum_{z=1}^d q_z = 1$.
    %
%
    Let $S_t = (S_{1,t}, \dots, S_{d,t})$ be the survival vector $S_{z, t} = \b{P}\set{T_z 
    > t}$, which is assumed continuous for $t \geq 0$. Suppose that it is 
    desired to understand the event probabilities for randomly selected $\gamma 
    \in \Gamma$, unconditional on subgroup membership. 
    We assume that members of each cohort are in the inactive (0) state
    at times $t<0$.

    Given a random sample $\gamma_1, \dots, \gamma_n$, $n \in \b{N}$ of 
    individuals, let $\mathbf{n} = (n_1, \dots, n_d)$ and
    \begin{equation}
    n = \sum_{z=1}^d n_z
    \end{equation}
     where $n_z$ is the random number of drawn 
    individuals of cohort $z$.     
     In considering the event time probabilities of each
    subgroup, the random number of particles excludes the use of many 
    well established results in survival analysis. Therefore, we begin with 
    a somewhat restricted framework, and assume a known number of initial 
    individuals of each type. 
    
    Assume the sample 
    contains  a \textbf{known} number $n_z = a_z n$, $a_z\in(0,1)$,
     of individuals of cohort $z$, and let $\mu_{j, z, t}^{n_z}\geq 0$ be 
    the number of the   cohort $z$ individuals who are in state $j \in 
    \b{Z}_4$ at time $t$, so that
    
    \begin{equation}
    \sum_{j=0}^3 \mu_{j,z,t}^{n_z} = n_z
    \end{equation}
    is conserved. Also, we assume that there exists $\tau_z < \infty$ when all
    particles either become inactive or censored so that $\tau_z$ is
    the infimum time where the condition
    \begin{equation}
    n_z = \mu_{2,z,t}^{n_z} + \mu_{3,z,t}^{n_z} \qquad \forall t>\tau_z
    \end{equation}
    holds.
    
    For the sample of size $n_z$, we
    denote the $z$-type cumulative hazard by $\Lambda_{z, t}$ and respectively 
    define the $z$-type cumulative hazard and survival estimates by
\begin{align}
\hat{\Lambda}_{z, t}^{n_z} &= \int_0^t \frac{\dd \mu^{n_z}_{ 2,z, s}}{\mu^{n_z}_{ 1,z, s-}} \label{eq:Lambdahat}\\
 \hat{S}^{n_z}_{z, t} &= \prod_{s \leq t} \innp{1 - \dd \hat{\Lambda}^{n_z}_{z, s}}. \label{eq:Shat}
 \end{align}
    Define further \[ B^{n_z}_{z, t} = \sqrt{n_z}\, \frac{\hat{S}^{n_z}_{z, t} - S_{z, t}}{S_{z, 
    t}}\] and note that $\hat{S}^{n_z}_{z, t} = \hat{S}^{n_z}_{z, \tau_z}$ and 
    ${B}^{n_z}_{z, t} = {B}^{n_z}_{z, \tau_z}$ for all $t \geq \tau_z$.

    From \cite{gill1980censoring}, it follows that $\set{B^{n_z}_{z, t} : t \geq 0}$ is 
    a mean-zero square-integrable martingale with Meyer bracket process 
\begin{equation}\label{eq:meyer}
      \inn{B^{n_z}_{z,t}, B^{n_z}_{w,t}}_t
      = \delta_{zw} n_z \int_0^{t \wedge \tau_z} 
      \dd  \Lambda_{z, s} \innp{\frac{\hat{S}^{n_z}_{z, s-}}{S_{z, s}}}^{2}
      \frac{1_{\set{\mu^{n_z}_{ 1,z, s-}>0}}}{\mu^{n_z}_{ 1,z, s-}},
\end{equation}
where $t \wedge \tau_z = \min\{ t, \tau_z \},$ and $\delta_{(\cdot,\cdot)}$ is the
Kroenicker delta function.

\subsection{Convergence Theorems}
    \label{s:conv}

    In order to guarantee convergence of the estimator, we make the following 
    assumptions (based upon an initially known sample size distribution $\mathbf{n}$).
    \begin{asm}
      \label{asm:sample_size}
      We assume that the initial sample is chosen large enough to ensure 
      that individuals of cohort $z$, at state ${1}$ (active), exist at all points $t \in 
      [0, \tau_z]$, $z \in \set{1, \dots, d}$. That is,
      \begin{align*}
        \inf_{z \in \b{N}_d} \mu^{n_z}_{1, z, \tau_z-}
        & > 0, \quad \mathrm{a.s.}
      \end{align*}
    \end{asm}
    Since any survival function is monotone, an immediate result that 
    follows from the above is assumption is 
    \begin{align}
      \label{bound:S}
      c < S_{z, \tau_z} \leq S_{z, t} \leq 1, \quad t \geq 0,
    \end{align}
    for some constant $c>0$.
    \begin{asm}
      \label{asm:sample_size_2}
      It is assumed that as $n$ becomes large, the sample size for each 
      individual type will grow to infinity. That is, 
      \begin{align*}
        \lim_{n \to \infty} \inf_{z \in \b{N}_d, a \in V}
        \mu^{n a_z}_{1, z, \tau_z-} = \infty, \quad \mathrm{a.s.}
      \end{align*}
    \end{asm}
    \begin{asm}
      \label{asm:sample_size_3}
      For each $z \in \set{1, \dots, d}$ there exists a non-increasing 
      continuous function $m_z:[0, \infty) \to (0, 1]$ such that
      \begin{align*}
        \lim_{n \to \infty} \sup_{t \geq 0}
        \abs{\frac{\mu^{n a_z}_{1,  z, t}}{n a_z} - m_{z,t}}=0 \quad \mathrm{a.s.}
      \end{align*}
    \end{asm}

    Note that in the case of fixed censoring, that is, in the case that 
    censoring exists only at time $\tau$, the above is satisfied by $m_{z, t} 
    = S_{z, t}$. In the general case, $m_{z, t}$ can be seen as the probability 
    that an individual of cohort $z$ has not yet left state $1$. That is, $m_{z, 
    t}$ is the probability that an individual has not left due to censoring or 
    death by time $t$, and so $m_{z, t} = S_{z, t} C_{z, t-}$,
    where $C_{z, t}$ is the probability that censoring has not occurred by time 
    $t$.  

    To prove the main theorem, we now present a series of lemmata.
    \begin{lem}
      \label{lem:conv_Z_first_term}
      If $\hat{q}$ is defined is in Eq.~\ref{eq:q_z} and $\hat{S}_{z, s-}^{n_z}$ is defined as in Eq.~\ref{eq:Shat}, then 
      \begin{align*}
        \sqrt{n} \sum_{z = 1}^d (\hat{q}_z - q_z) 
        \int_0^{t \wedge \tau_z} \!\d{s}\, 
        \innp{\hat{S}_{z, s-}^{n\hat{q}_z} - S_{z, s}} \toprobability 0, 
      \end{align*}
      as $n\to\infty$, uniformly in $t \geq 0$. 
    \end{lem}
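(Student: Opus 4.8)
The plan is to bound $\sup_{t\ge 0}$ of the left-hand side by a single random variable that no longer involves $t$ and converges to zero in probability. Write $n_z = n\hat q_z$, which is an integer by Eq.~\ref{eq:q_z}. Since $S_{z,\cdot}$ is continuous and $d<\infty$, for every $t\ge 0$
\begin{align*}
  \abs{\sqrt n\sum_{z=1}^d(\hat q_z-q_z)\int_0^{t\wedge\tau_z}\!\d{s}\,\innp{\hat S^{n_z}_{z,s-}-S_{z,s}}}
  &\le \sum_{z=1}^d\abs{\sqrt n\,(\hat q_z-q_z)}\;\tau_z\;\sup_{0\le s\le\tau_z}\abs{\hat S^{n_z}_{z,s-}-S_{z,s}}.
\end{align*}
The right-hand side does not depend on $t$, so uniformity in $t$ comes for free once it is shown to vanish in probability; and since the sum is finite it suffices to handle a single $z$, i.e.\ to show that the product of the two random factors $\abs{\sqrt n(\hat q_z-q_z)}$ and $\sup_{0\le s\le\tau_z}\abs{\hat S^{n_z}_{z,s-}-S_{z,s}}$ is $o_p(1)$.

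For the first factor, $\mathbf n=(n_1,\dots,n_d)$ is multinomial$(n,\mathbf q)$, so the central limit theorem gives $\sqrt n(\hat q-q)\todistribution N(0,\diag(q)-qq^{\top})$; in particular $\sqrt n(\hat q_z-q_z)=O_p(1)$ and this family is tight.

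For the second factor I would invoke uniform consistency of the product-limit estimator, $\sup_{0\le s\le\tau_z}\abs{\hat S^{n_z}_{z,s-}-S_{z,s}}\toprobability 0$. The one point requiring care is that the sample size $n_z=n\hat q_z$ appearing in $\hat S^{n_z}_{z,\cdot}$ is itself random, whereas the martingale and weak-convergence statements recorded after Eq.~\ref{eq:meyer} presuppose a deterministic number of cohort-$z$ individuals. I would remove this difficulty by conditioning on the $\sigma$-field generated by the cohort labels: given $\mathbf n$, each $\hat S^{n_z}_{z,\cdot}$ is an ordinary product-limit estimator built from $n_z$ i.i.d.\ cohort-$z$ observations, for which uniform consistency on $[0,\tau_z]$ is classical---Assumption~\ref{asm:sample_size} keeps the estimator well defined throughout $[0,\tau_z]$, and Assumption~\ref{asm:sample_size_3} supplies the limiting risk-set proportion $m_{z,\cdot}$, which stays positive by~\eqref{bound:S}. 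Since $\hat q_z\toas q_z>0$ we have $n_z\to\infty$ a.s.\ (Assumption~\ref{asm:sample_size_2}), so conditioning plus dominated convergence lift the fixed-sample-size statement to $\sup_{0\le s\le\tau_z}\abs{\hat S^{n_z}_{z,s-}-S_{z,s}}\toprobability 0$; combined with the weak convergence of $B^{n_z}_{z,\cdot}$ and~\eqref{bound:S} one in fact gets the rate $O_p(n^{-1/2})$.

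Finally, a tight sequence times an $o_p(1)$ sequence is $o_p(1)$---no independence of the two factors is needed---so each of the $d$ summands above is $o_p(1)$, hence so is their sum, and therefore so is $\sup_{t\ge 0}$ of the original expression. The only real obstacle is thus the randomness of $n_z$ inside the product-limit estimator; the rest is the elementary ``bounded $\times$ vanishing'' estimate together with the standard multinomial CLT and the Glivenko--Cantelli-type uniform consistency of the product-limit estimator already used elsewhere in this appendix.
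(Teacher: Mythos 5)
Your proposal is correct and follows essentially the same route as the paper: each summand is split into the tight multinomial-CLT factor $\sqrt{n}(\hat q_z - q_z)$ times the uniformly (in $t$) vanishing Kaplan--Meier deviation, with the randomness of the cohort size $n\hat q_z$ handled by conditioning on the cohort counts and using that they diverge almost surely. The only difference is one of packaging: the paper carries out the uniform-consistency step explicitly, bounding $\sup_t\bigl((\hat S^{n\hat q_z}_{z,t-}-S_{z,t})/S_{z,t}\bigr)^2$ via the Meyer bracket of $B^{n\hat q_z}_{z,\cdot}$ and Lenglart's inequality together with Assumption~\ref{asm:sample_size_2}, whereas you cite the classical fixed-sample-size uniform consistency of the product-limit estimator conditionally and lift it by dominated convergence.
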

    \begin{proof}
      It is claimed that to prove the statement of the lemma, it 
      suffices to show that 
      \begin{align}
        \label{conv:B_N}
        \sup_{t \geq 0} 
        \innp{\frac{\hat{S}_{z, t-}^{n\hat{q}_z}  - S_{z, t}} 
        {S_{z, t}}}^{2} \toprobability 0, 
      \end{align}
      uniformly in $t \geq 0$, for each $z = 1, \dots, d$. 

      Indeed, for if the above holds, then
      \begin{align*}
        \int_0^{t \wedge \tau_z} \!\d{s}\, 
        \innp{\hat{S}_{z, s-}^{n\hat{q}_z} - S_{z, s}} \toprobability 0, 
      \end{align*}
      uniformly in $t \geq 0$. Since the central limit theorem implies 
      that $\sqrt{n} (\hat{q}_z - q_z) \todistribution N(0, q_z(1 - q_z))$, 
      each term in the sum would converge in probability to $0$, 
      uniformly in $t \geq 0$.

      And so, if $\b{E}_{N}$ denotes the expectation given $N$, 
      we have that 
      \begin{align*}
       \lefteqn{ \b{E} \innp{\frac{\hat{S}_{z, t-}^{n\hat{q}_z}  - S_{z, t}} 
        {S_{z, t}}}^{2}
         = \b{E} \frac{1}{n\hat{q}_z } \b{E}_{n\hat{q}_z } \innp{B_{z, t}^{n\hat{q}_z }}^{2}  }\\
        & \qquad\qquad= \b{E} \frac{1}{n\hat{q}_z} \b{E}_{n\hat{q}_z} n\hat{q}_z 
        \int_0^{t \wedge \tau_z}\!
        \frac{\d{\Lambda_{z, s}}}{\mu^{n\hat{q}_z}_{ 1,z, s-}}
        \innp{\frac{\hat{S}^{n\hat{q}_z} _{z, s-}}{S_{z, s}}}\\
        & \qquad\qquad= \b{E}\int_0^{t \wedge \tau_z}\!
        \frac{\d{\Lambda_{z, s}}}{\mu^{n\hat{q}_z} _{ 1,z, s-}}
        \innp{\frac{\hat{S}^{n\hat{q}_z} _{z, s-}}{S_{z, s}}}\\
        & \qquad\qquad\leq C \b{E} 
        \innp{\mu^{n\hat{q}_z}_{1, z, \tau_z}}^{-1},
      \end{align*}
      for some arbitrary constant $C$. From Lenglart's inequality (cf. 
      \cite{lenglart1977relation}),
      \begin{align*}
        \b{P}\! \set{\sup_t \innp{\frac{\hat{S}_{z, t-}^{n\hat{q}_z} 
            - S_{z, t}}
        {S_{z, t}}}^{2} > \epsilon}
        & \leq  \frac{\eta}{\epsilon} + \b{P}\!
        \set{\mu^{n\hat{q}_z} _{1, z, \tau_z-}
        < \frac{C}{\eta}},
      \end{align*}
      for any arbitrary $\eta, \epsilon > 0$. Therefore, from 
      Assumption~\ref{asm:sample_size_2}, since $n_z \to \infty$ a.s., 
      the desired result follows.
    \end{proof}

Turning momentarily to the situation where there are two populations denoted by superscripts $(1)$, and $(2)$,
    for any $t \geq 0$, define
    \begin{align*}
      \hat{\Theta}_t^\delta &= \sqrt{\frac{n^{(2)}}{n^{(1)}+n^{(2)}}}
      \int_0^{t \wedge \tau} \!\d{s}\, \sqrt{n^{(1)}} (\hat{\theta}^{(1)}_{s-} 
      - \theta^{(1)}_s)  \nn
      &\quad- \sqrt{\frac{n^{(1)}}{n^{(1)}+n^{(2)}}}
      \int_0^{t \wedge \tau} \!\d{s}\, \sqrt{n^{(2)}} (\hat{\theta}^{(2)}_{s-} 
      - \theta^{(2)}_s),
    \end{align*}
    noting that setting $\theta_s^{(1)} = \theta_s^{(2)}$ recovers our test statistic 
    of Eq.~\ref{eq:Theta}.
    For a general survival function $\theta$, with respective estimate 
    $\hat{\theta}$, define $\hat{Y}_t$ by 
    \begin{align}
      \label{eqn:estimate_integral}
      \hat{Y}_t = \int_0^{t \wedge \tau} \!\d{s}\, \sqrt{n} 
      \innp{\hat{\theta}_{s-} - \theta_s}, \quad t \geq 0.
    \end{align}
    If the process $\hat{Y}$ converges in distribution to some $Y \sim N(0, 
    \sigma^2)$, since $n^{(i)} / (n^{(1)}+n^{(2)})$ converges to $p^{(i)}$, $i = 1, 2$, it 
    follows that 
    \begin{align}
      \hat{\Theta}^\delta_t \todistribution \sqrt{p^{(2)}} Y^{(1)}_t - \sqrt{p^{(1)}} Y^{(2)}_t 
      \approx N(0, p^{(2)} \sigma^2_{1} + p^{(1)} \sigma^2_{2}).\label{eq:Thetadist}
    \end{align}

Now we turn to analysis under a single population, dropping the superscripts.
    Note that $\hat{Y}_t = \sum_{z = 1}^d \hat{Z}_{z, t}$, where 
    \begin{align}
      \label{eqn:Z_n_z}
      \hat{Z}_{z, t} 
      & = \sqrt{n} \int_0^{t \wedge \tau_z} 
      \!\d{s}\, \innp{\hat{q}_z \hat{S}_{z, s-}^{n\hat{q}_z} - q_z 
      S_{z,s}}\\
      & = \sqrt{n} (\hat{q}_z - q_z) \int_0^{t \wedge \tau_z}\!\d{s} 
      \, \innp{\hat{S}_{z, s-}^{n\hat{q}_z} - S_{z, s}} \nonumber \\ 
      & \qquad + \sqrt{n} (\hat{q}_z - q_z) 
      \int_0^{t \wedge \tau_z}\!\d{s}\, 
      S_{z, s} \nonumber \\ 
      & \qquad + \sqrt{n} q_z \int_0^{t \wedge \tau_z}\!\d{s}\, 
      (\hat{S}_{z, s-}^{n\hat{q}_z}  - S_{z, s}) \nonumber 
    \end{align}
    Therefore, if it can be shown that 
    \begin{align*}
      \sqrt{n} \sum_{z = 1}^d (\hat{q}_z - q_z) 
      \int_0^{t \wedge \tau_z} \!\d{s}\, 
      \innp{\hat{S}_{z, s-}^{n\hat{q}_z}  - S_{z, s}} \toprobability 0, 
    \end{align*}
    uniformly in $t$, then convergence of $(\hat{Y}_t : t \geq 0)$ is 
    dependent only upon the convergence of the $d$-dimensional 
    vector-valued process $\hat{\zeta}(\hat{q})$ given by 
    \begin{align}
      \label{eqn:zeta_n}
      \hat{\zeta}_{z, t}(a) 
      & = \sqrt{n} (\hat{q}_z - q_z) \int_0^{t \wedge \tau_z}\!d{s}\,
      S_{z, s} \nn
       &\quad+ \sqrt{n} q_z \int_0^{t \wedge \tau_z}\!\d{s}\, 
      (\hat{S}_{z, s-}^{na_z} - S_{z, s}),
    \end{align}
    with $a = (a_1, \dots, a_d) \in (0, 1)^d$ chosen in a sufficiently small 
    neighborhood $V$ of $q$. This decomposition will thus lead to the main 
    theorem. To show the desired convergence of $\hat{\zeta}_t(\hat{q})$, we 
    first focus on convergence of $\hat{\zeta}_t(a)$. 

    Let $\phi_{z, t} = \int_t^{\tau_z}\!\d{s}\, S_{z, s}$ and 
    write $\hat{\zeta}_t(a) = \hat{\zeta}^1_t + \hat{\zeta}^2_t(a)$, where 
    
    \begin{align}
      \label{eqn:zeta_1}
      \hat{\zeta}^1_{z, t} = \sqrt{n} \innp{\hat{q}_z - q_z}
      \int_0^{t \wedge \tau_z}\!(-\d{\phi_{z, s}}),
    \end{align}
    and
    \begin{align}
      \label{eqn:zeta_2}
      \hat{\zeta}^2_{z, t}(a) = \frac{q_z}{\sqrt{a_z}} \int_0^{t \wedge \tau_z}\!(-\d{\phi_{z, s}}) B_{z, s}^{na_z},
    \end{align}

    \begin{lem}
      \label{lem:conv_zeta_parts}
      Suppose that $\set{\hat{\zeta}^1_t(a) : t \geq 0}$ 
      and $\set{\hat{\zeta}^2_t(a) : t \geq 0}$ are the processes respectively 
      defined by equations (\ref{eqn:zeta_1}) and (\ref{eqn:zeta_2}), and that 
      $\tilde{B}$ is the $d$-dimensional mean-zero Gaussian process defined by 
      \begin{align}
        \inn{\tilde{B}_z, \tilde{B}_w}_t = \delta_{z,w} \int_0^{t \wedge \tau_z}\!\frac{\d{\Lambda_{z, s}}}{S_{z, s} 
        C_{z, s-}}.\label{eq:Btilde}
      \end{align}
      Then $\hat{\zeta}^1_t \todistribution \zeta^1_t  
        \mathrm{and} \quad 
        \hat{\zeta}^2_t(a) \todistribution \zeta^2_t(a),$ 
        in the space of compactly supported functions $\s{D}_{\b{R}^d}[0, \infty)\,\,  \mathrm{as}\,\,
        n \to \infty, $
      for each $a \in V$, where 
      $\zeta^1_t = (\zeta^1_{1, t}, \dots, \zeta^1_{d, t})$ 
      is the mean-zero square-integrable Gaussian process defined by
      \begin{align}
        \label{eqn:zeta_1_bracket}
        & \inn{\zeta^1_z, \zeta^1_w}_t \\ & = -q_z q_w \innp{\int_0^{t \wedge \tau_z} \!\d{s}\, S_{z, s}}
        \innp{\int_0^{t \wedge \tau_w}\!\d{s}\, S_{w, s}} \nn
        &\quad + \delta_{z, w} q_z \innp{\int_0^{t \wedge \tau_z}\!\d{s}\, 
        S_{z, s}}^{2},
        \nonumber
      \end{align}
      and $\zeta^2_t(a) = (\zeta^2_{1, t}(a), \dots, \zeta^2_{d, t})$ 
      is given by 
      \begin{align}
        \label{eqn:zeta_2_bracket}
        \zeta^2_{z, t}(a) = \frac{q_z}{\sqrt{a_z}} \innp{\int_0^{t \wedge \tau_z}\!\d{\tilde{B}}_{z, s} \phi_{z, s} - \phi_{z, t \wedge \tau_z} \tilde{B}_{z, t \wedge \tau_z}}
      \end{align}
      The processes $\hat{\zeta}^1$ and $\hat{\zeta}^2(a)$ are independent, and 
      there exist a Skorohod representations such that \begin{align*}
        \sup_{t \geq 0} 
        \abs{\hat{\zeta}^1_{z, t} - \zeta^1_{z, t}} \to 0,
      \end{align*}
      and 
      \begin{align*}
        \sup_{t \geq 0, a \in V} 
        \abs{\hat{\zeta}^2_{z, t}(a) - \zeta^2_{z, t}(a)} \to 0,
      \end{align*}
      almost surely as $n \to \infty$.
    \end{lem}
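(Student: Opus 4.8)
The plan is to handle the two processes separately --- $\hat{\zeta}^1$ is a finite-dimensional central limit theorem dressed up by a deterministic weight, while $\hat{\zeta}^2(a)$ is driven by the normalized product-limit martingale $B_{z,\cdot}^{na_z}$ of Eq.~\ref{eq:Shat} --- and then to glue the two limits together.

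For $\hat{\zeta}^1$: I would first observe that since $\phi_{z, s} = \int_s^{\tau_z}\!\mathrm{d}u\, S_{z, u}$ has $-\mathrm{d}\phi_{z, s} = S_{z, s}\,\mathrm{d}s$, the weight $g_{z, t} := \int_0^{t \wedge \tau_z}(-\mathrm{d}\phi_{z, s})$ is deterministic, nondecreasing, continuous and uniformly bounded (equal to $\phi_z$ for $t \geq \tau_z$), so that $\hat{\zeta}^1_{z, t} = \sqrt{n}\,(\hat{q}_z - q_z)\, g_{z, t}$. Since the cohort counts are multinomial$(n; q)$, the multivariate central limit theorem gives $\sqrt{n}(\hat{q} - q) \todistribution N(0, \Sigma)$ with $\Sigma_{zw} = q_z \delta_{z, w} - q_z q_w$; the map $v \mapsto \big(t \mapsto (v_1 g_{1, t}, \dots, v_d g_{d, t})\big)$ from $\b{R}^d$ into $\s{D}_{\b{R}^d}[0, \infty)$ is Lipschitz for the uniform norm because $\sup_t\abs{g_{z, t}} = \phi_z < \infty$, so the continuous mapping theorem yields $\hat{\zeta}^1 \todistribution \zeta^1$, the mean-zero Gaussian process with $\mathrm{Cov}(\zeta^1_{z, t}, \zeta^1_{w, t}) = g_{z, t} g_{w, t}\Sigma_{zw}$ --- which is precisely Eq.~\ref{eqn:zeta_1_bracket} --- and square-integrability is immediate. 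A Skorohod almost-sure version of the finite-dimensional limit pushed through this continuous bounded map gives $\sup_t\abs{\hat{\zeta}^1_{z, t} - \zeta^1_{z, t}} \to 0$ a.s.

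For $\hat{\zeta}^2(a)$ with $a \in V$ fixed: I would invoke the weak-convergence theory for the product-limit estimator (Theorem~4.2.2 of \cite{gill1980censoring}) together with Assumptions~\ref{asm:sample_size}--\ref{asm:sample_size_3} and the uniform consistency $\hat{S}^{na_z}_{z, s-}/S_{z, s} \toprobability 1$ established inside the proof of Lemma~\ref{lem:conv_Z_first_term} to conclude that $B_{z,\cdot}^{na_z}$ converges weakly in $\s{D}_{\b{R}^d}[0, \infty)$ to $\tilde{B}$: passing $n \to \infty$ in the Meyer bracket Eq.~\ref{eq:meyer}, with $\hat{S}^{na_z}_{z, s-}/S_{z, s} \to 1$ and $\mu^{na_z}_{1, z, s-}/(na_z) \to m_{z, s} = S_{z, s}C_{z, s-}$ from Assumption~\ref{asm:sample_size_3}, shows the brackets converge to those of Eq.~\ref{eq:Btilde} while the jumps vanish, so the martingale central limit theorem applies. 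Since $\hat{S}^{na_z}$ is a step function, $B_{z,\cdot}^{na_z}$ has locally bounded variation and pathwise integration by parts --- using $B^{na_z}_{z, 0} = 0$ and continuity of $\phi_{z, \cdot}$ --- gives $\int_0^{t \wedge \tau_z}(-\mathrm{d}\phi_{z, s}) B^{na_z}_{z, s} = -\phi_{z, t \wedge \tau_z} B^{na_z}_{z, t \wedge \tau_z} + \int_0^{t \wedge \tau_z}\!\phi_{z, s}\,\mathrm{d}B^{na_z}_{z, s}$. The functional $x \mapsto \big(t \mapsto \int_0^{t \wedge \tau_z}(-\mathrm{d}\phi_{z, s})\, x_s\big)$ is continuous on $\s{D}[0, \infty)$ because $\phi_{z, \cdot}$ is continuous and of bounded variation on $[0, \tau_z]$; applying the continuous mapping theorem to $B_{z,\cdot}^{na_z} \todistribution \tilde{B}$ and scaling by the constant $q_z/\sqrt{a_z}$ then yields $\hat{\zeta}^2(a) \todistribution \zeta^2(a)$, and the same integration by parts applied to $\tilde{B}$ recovers Eq.~\ref{eqn:zeta_2_bracket}.

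The remaining tasks are the uniformity in $a$, the independence, and the joint Skorohod representation. I would realize all of the $B_{z,\cdot}^{na_z}$, $a \in V$, on a single probability space by fixing one infinite i.i.d.\ sequence of cohort-$z$ particle paths and letting $\mu^{na_z}_{1, z, t}$ count those still in state $1$ among the first $na_z$ of them, so that $a \mapsto B_{z,\cdot}^{na_z}$ becomes a partial-sum family; the functional central limit theorem should then strengthen to convergence uniform in $(t, a) \in [0, \infty) \times V$ once Assumption~\ref{asm:sample_size_2}'s uniform-in-$a$ lower bound on $\mu^{na_z}_{1, z, \tau_z-}$ furnishes stochastic equicontinuity in $a$. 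This is the step I expect to be hardest, since it requires controlling the modulus of the whole family $\{B_{z,\cdot}^{na_z} : a \in V\}$ jointly in the auxiliary parameter rather than merely for each $a$ separately. Independence of $\hat{\zeta}^1$ and $\hat{\zeta}^2(a)$ I would read off from the modelling assumption that, conditionally on cohort membership, particle paths are independent of the assignment: $\hat{\zeta}^1$ is measurable with respect to the multinomial counts alone and $\hat{\zeta}^2(a)$ with respect to the within-cohort paths at the deterministic sizes $na_z$ alone, and this independence is preserved in the limit. Finally, putting both Skorohod representations on one product space and using that Skorohod convergence to the sample-continuous limits $\zeta^1, \zeta^2(a)$ automatically improves to uniform convergence on compacts, while the integration functionals above are uniformly continuous on the relevant compact sets, delivers $\sup_t\abs{\hat{\zeta}^1_{z, t} - \zeta^1_{z, t}} \to 0$ and $\sup_{t \geq 0,\, a \in V}\abs{\hat{\zeta}^2_{z, t}(a) - \zeta^2_{z, t}(a)} \to 0$ a.s.
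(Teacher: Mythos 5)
Your proposal takes essentially the same route as the paper: the multinomial CLT handles $\hat{\zeta}^1$, and for $\hat{\zeta}^2(a)$ the Meyer bracket of $B^{na_z}_{z,\cdot}$ is shown (via Lenglart's inequality and Assumptions~\ref{asm:sample_size_2}--\ref{asm:sample_size_3}, using $m_{z,t}=S_{z,t}C_{z,t-}$) to converge to that of $\tilde{B}$, after which weak convergence and a Skorohod representation uniform in $(t,a)$ are obtained from Theorem 4.2.1 of \cite{gill1980censoring}. The only notable differences are minor: you carry out the functional step by explicit integration by parts plus the continuous mapping theorem where the paper instead cites Theorem 2.1 of \cite{gill1981large}, you give a more explicit independence argument (the paper simply asserts it from independence of the limits), and the uniform-in-$a$ strengthening you flag as the hard remaining step is precisely what the paper extracts from the uniform-over-$V$ form of its assumptions and Gill's Skorohod construction, so your sketch there is no less complete than the published argument.
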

    \begin{proof}
      To begin note that independence follows immediately from the 
      independence of the respective limiting processes. Since $\mathbf{n}$ is 
      a multinomial random variable, (\ref{eqn:zeta_1_bracket}) follows 
      from the central limit theorem. In the case of $\hat{\zeta}^2_t(a)$, 
      we first consider 
      $B_{z,t}^{na_z}$.
     
      An application of Lenglart's inequality, very similar to that in the 
      proof of Lemma~\ref{lem:conv_Z_first_term}, along with 
      Assumption~\ref{asm:sample_size_2}, shows that 
      \begin{align*}
        \sup_{a \in V, t \geq 0} 
        \abs{\hat{S}_{z, t-}^{na_z} - S_{z, t}} \toprobability 0, 
        \quad \mathrm{as}\,\, n \to \infty.
      \end{align*}
      Moreover, from Assumption~\ref{asm:sample_size_3}, 
      \begin{align*}
        \sup_{a \in V, t \geq 0} 
        \abs{\frac{n a_z}
        {\mu^{na_z}_{1, z, t-}} - \frac{1}{m_{z, t}}} \toprobability 0, \quad \mathrm{as}\,\, 
        n \to \infty.
      \end{align*}
      It follows that 
      \begin{align*}
        \frac{n a_z}{\mu^{na_z}_{1, z, t-}}
        \innp{\frac{\hat{S}_{z, t-}^{na_z}}{S_{z, t}}}^{2}
        \toprobability \frac{1}{m_{z, t}},
      \end{align*}
      uniformly in $t \geq 0$, and since  $m_{z, t} = S_{z, t} C_{z, t-}$,
      \begin{align*}
        \inn{B_{z,t}^{na_z}, B_{w,t}^{na_z}}_t \toprobability \delta_{z,w} \int_0^{t \wedge \tau_z}\! \frac{\d{\Lambda_{z, s}}} 
        {S_{z, s}C_{z, s-}}.
      \end{align*}  

      Therefore, from theorem 4.2.1 of \cite{gill1980censoring}, $B^{na_z}_{z,t}
      \todistribution \tilde{B}_{z,t}$, and there exists a Skorohod representation of 
      $B^{na_z}_{z,t}$ such that
      \begin{align*}
        \sup_{t \geq 0, a \in V} \abs{B_{z, t}^{na_z} - \tilde{B}_{z, t}} 
        \to 0,
      \end{align*}
      almost surely as $n \to \infty$. Since almost sure convergence of 
      $B_{z, t}^{na_z} $ implies almost sure convergence of bounded functionals of 
      $B_{z, t}^{na_z} $, the desired convergence of $\hat{\zeta}^2(a)$ follows 
      from Theorem 2.1 of \cite{gill1981large}.
    \end{proof}

    \begin{coro}
      \label{cor:conv_sum_1}
      If the process $\hat{\zeta}(a) = \set{\hat{\zeta}_t(a)}$ is defined by 
      equation (\ref{eqn:zeta_n}), then 
      \begin{align}
        \sum_{z = 1}^d \hat{\zeta}_z(a) 
        & \todistribution \sum_{z = 1}^d \zeta_{z, t}(a) \\
        & = \sum_{z = 1}^d \zeta^1_{z, t} + \zeta^2_{z, t}(a)\nonumber.
      \end{align}
    \end{coro}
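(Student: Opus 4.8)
The plan is to read off the corollary from Lemma~\ref{lem:conv_zeta_parts} by a continuous mapping argument, the only genuine content being the passage from two marginal weak limits to a joint one. Using the decomposition $\hat{\zeta}_t(a) = \hat{\zeta}^1_t + \hat{\zeta}^2_t(a)$ fixed just before that lemma, write $\sum_{z=1}^d \hat{\zeta}_{z,t}(a) = \sum_{z=1}^d \hat{\zeta}^1_{z,t} + \sum_{z=1}^d \hat{\zeta}^2_{z,t}(a)$. Coordinate summation $x \mapsto \sum_{z=1}^d x_z$ is a bounded linear, hence continuous, map $\s{D}_{\b{R}^d}[0,\infty) \to \s{D}_{\b{R}}[0,\infty)$, so it suffices to show that the $\b{R}^d$-valued process $\hat{\zeta}(a) = \hat{\zeta}^1 + \hat{\zeta}^2(a)$ converges in distribution to $\zeta(a) = \zeta^1 + \zeta^2(a)$ in $\s{D}_{\b{R}^d}[0,\infty)$.

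First I would upgrade the two marginal convergences of Lemma~\ref{lem:conv_zeta_parts} to joint convergence of the pair $(\hat{\zeta}^1, \hat{\zeta}^2(a))$. Since $\hat{\zeta}^1$ and $\hat{\zeta}^2(a)$ are independent for every $n$, the joint law is the product measure $\mathcal{L}(\hat{\zeta}^1) \otimes \mathcal{L}(\hat{\zeta}^2(a))$; weak convergence is preserved under the formation of products, so this converges weakly to $\mathcal{L}(\zeta^1) \otimes \mathcal{L}(\zeta^2(a))$, and since $\zeta^1$ and $\zeta^2(a)$ are themselves independent (also recorded in the lemma) this product is exactly the joint law of $(\zeta^1, \zeta^2(a))$. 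Addition on $\s{D}_{\b{R}^d}[0,\infty)$ is continuous at pairs of functions with no common discontinuity point, and here every coordinate of $\zeta^1$ has continuous sample paths --- being the limiting multinomial Gaussian variable times the deterministic continuous function $t \mapsto \int_0^{t \wedge \tau_z} S_{z,s}\,\d{s}$ --- so the continuous mapping theorem gives $\hat{\zeta}(a) \todistribution \zeta(a)$, whence $\sum_{z=1}^d \hat{\zeta}_z(a) \todistribution \sum_{z=1}^d \zeta^1_z + \zeta^2_z(a) = \sum_{z=1}^d \zeta_z(a)$.

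Equivalently, and more cleanly since Lemma~\ref{lem:conv_zeta_parts} already provides it, one can sidestep any discussion of continuity of addition in the Skorohod topology by passing to the common probability space that carries the Skorohod representations of that lemma (legitimate because the two families are independent, so one simply takes a product of the two representation spaces). There the triangle inequality gives
\[
\sup_{t \ge 0,\ a \in V} \Bigl| \sum_{z=1}^d \hat{\zeta}_{z,t}(a) - \sum_{z=1}^d \zeta_{z,t}(a) \Bigr| \le \sum_{z=1}^d \Bigl( \sup_{t \ge 0} \bigl| \hat{\zeta}^1_{z,t} - \zeta^1_{z,t} \bigr| + \sup_{t \ge 0,\ a \in V} \bigl| \hat{\zeta}^2_{z,t}(a) - \zeta^2_{z,t}(a) \bigr| \Bigr) \longrightarrow 0
\]
almost surely as $n \to \infty$; uniform almost-sure convergence implies convergence in the Skorohod topology and hence in distribution, and weak limits are unchanged by passing to a representation, which is the assertion.

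I expect the marginal-to-joint upgrade to be the one point needing care: marginal weak convergence of $\hat{\zeta}^1$ and of $\hat{\zeta}^2(a)$ separately does not determine the limiting law of their sum, and it is precisely the independence built into Lemma~\ref{lem:conv_zeta_parts} --- of the approximants and of their limits --- that closes the gap. With joint convergence in hand, the rest is the continuous mapping theorem applied to a finite sum of coordinates.
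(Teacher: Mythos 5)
Your proposal is correct, and its second route is essentially the paper's own proof: the paper likewise invokes the Skorohod representations from Lemma~\ref{lem:conv_zeta_parts} to get almost-sure uniform convergence of $\hat{\zeta}^1$ and $\hat{\zeta}^2(a)$, sums them to get $\hat{\zeta}_t(a)\to\zeta_t(a)$ a.s.\ uniformly, and then concludes convergence in distribution (citing theorem 5.1 of \cite{billingsley2013}). Your explicit handling of the marginal-to-joint issue via independence and the product of representation spaces is a point the paper leaves implicit, but it does not change the argument.
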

    \begin{proof}
      From the previous theorem we may assume that 
      $\hat{\zeta}^1_{z, t} \to \zeta^1_{z, t}$ and 
      $\hat{\zeta}^2_{z, t}(a) \to \zeta^2_{z, t}(a)$ almost surely, 
      uniformly for $a \in V$ and $t \geq 0$. Therefore 
      \begin{align*}
        \hat{\zeta}_t(a) \to \zeta_t(a)
      \end{align*}
      almost surely, uniformly for $a \in V$ and $t \geq 0$. The statement 
      of the theorem then follows from theorem 5.1 of \cite{billingsley2013}. 
    \end{proof}
    
    Since $\mathbf{n}/n \toprobability \mathbf{q}$, from Theorem 4.4 of \cite{billingsley2013}    
    \begin{align*}
      \innp{\frac{\mathbf{n}}{n}, 
      \set{\sum_{z = 1}^d \hat{\zeta}_{z, t}(a) : t \geq 0}} 
      \todistribution \innp{\mathbf{q}, 
      \set{\sum_{z = 1}^d \zeta_{z, t}(a) : t \geq 0}}. 
    \end{align*}
    
    Define the map 
    $g : V \times \l^\infty(V \times [0, \infty)) \to 
      \l^\infty([0, \infty))$ by $g(a, f) = f(a, \cdot)$, then 
    \begin{align*}
      \sum_{z = 1}^d \zeta_{z, t}\innp{\frac{\mathbf{n}}{n}} 
      & = g\innp{\frac{\mathbf{n}}{n}, \sum_{z = 1}^d \zeta_z}.
    \end{align*}
    Furthermore, if for any $(a_1, f_1), (a_2, f_2) \in V \times \l^\infty(V 
    \times [0, \infty))$ we have that
    \begin{align*}
      \abs{a_1 - a_2} + \sup_{a \in V, t \geq 0} 
      \abs{f_1(a, t) - f_2(a, t)}
      & < \delta
    \end{align*}
    for some $\delta > 0$, then
    \begin{align*}
      \sup_{t \geq 0} & \abs{g(a_1, f_1)(t) - g(a_2, f_2)(t)} \\ 
                      & = \sup_{t \geq 0} \abs{f_1(a_1, t) - f_2(a_2, t)} \\
                      & \leq \sup_{t \geq 0} \abs{f_1(a_1, t) - f_1(a_2, t)}
                      + \sup_{t \geq 0} \abs{f_1(a_2, t) - f_2(a_2, t)}.
    \end{align*}
    Therefore, $g$ is a continuous at any $(a, f)$ such that 
    $f$ is continuous at $a$, uniformly in $t$. It thus follows from the 
    continuous mapping theorem (cf. \cite{van1996new}) that if $a \mapsto 
    \sum_{z = 1}^d \zeta_{z, t}(a)$ is continuous, uniformly in $t$, then 
    \begin{align}
      \label{conv:g}
      g\innp{\frac{\mathbf{n}}{n},\sum_{z = 1}^d \hat{\zeta}_z} 
      \todistribution g\innp{\mathbf{q},\sum_{z = 1}^d \zeta_z}.
    \end{align}
    
    \begin{lem}
      \label{lem:cont_a}
      If $\set{\zeta_t(a) : t \geq 0}$ is defined as in 
      Corollary~\ref{cor:conv_sum_1}, then the map 
      \begin{align*}
        a \mapsto \sum_{z = 1}^d \zeta_{z, t}(a) 
      \end{align*}
      is continuous for $a \in V$, uniformly in $t \geq 0$.
    \end{lem}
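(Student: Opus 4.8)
The plan is to observe that, among the limiting processes produced by Lemma~\ref{lem:conv_zeta_parts}, all of the $a$-dependence is concentrated in a single deterministic scalar factor. Writing $\zeta_{z,t}(a) = \zeta^1_{z,t} + \zeta^2_{z,t}(a)$ as in Corollary~\ref{cor:conv_sum_1}, the component $\zeta^1_{z,t}$ — whose bracket is (\ref{eqn:zeta_1_bracket}) — carries no dependence on $a$ whatsoever, so it suffices to establish continuity, uniform in $t$, of $a \mapsto \sum_{z=1}^d \zeta^2_{z,t}(a)$. From (\ref{eqn:zeta_2_bracket}),
\[
  \zeta^2_{z, t}(a) = \frac{q_z}{\sqrt{a_z}}\, \Psi_{z, t}, \qquad
  \Psi_{z, t} := \int_0^{t \wedge \tau_z}\!\d{\tilde{B}}_{z, s}\, \phi_{z, s} - \phi_{z, t \wedge \tau_z}\, \tilde{B}_{z, t \wedge \tau_z},
\]
where the process $\Psi_{z,\cdot}$ does not involve $a$. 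Integrating by parts, using that $\phi_{z,\cdot}$ has finite variation with $\d{\phi_{z,s}} = -S_{z,s}\,\d{s}$ and that $\tilde{B}_{z,0} = 0$, one may rewrite this more transparently as $\Psi_{z,t} = \int_0^{t \wedge \tau_z}\!\d{s}\, S_{z,s}\tilde{B}_{z,s}$.

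The first step is a uniform bound on $\Psi_{z,\cdot}$. Because its bracket (\ref{eq:Btilde}) is continuous in $t$ (as $\Lambda_{z,\cdot}$ is continuous), $\tilde{B}_{z,\cdot}$ has almost surely continuous sample paths; since $\Psi_{z,t} = \Psi_{z,\tau_z}$ for every $t \geq \tau_z$ and $[0, \tau_z]$ is compact, $M_z := \sup_{t \geq 0}\abs{\Psi_{z,t}} \leq \int_0^{\tau_z}\!\d{s}\,\abs{\tilde{B}_{z,s}} < \infty$ almost surely (one could equally bound $\b{E}\,M_z^2$ via the explicit bracket and Doob's inequality, but almost sure finiteness is all that is needed). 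As there are only finitely many cohorts, $M := \max_{1 \leq z \leq d} M_z < \infty$ almost surely.

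The second step is a Lipschitz bound on the prefactor. Since $V$ is a sufficiently small neighborhood of $q \in (0,1)^d$, there is a constant $c_0 > 0$ with $a_z \geq c_0$ for every $a \in V$ and every $z$, so each map $a \mapsto q_z/\sqrt{a_z}$ is Lipschitz on $V$ with a common constant $L < \infty$. Combining the two steps, for all $a, a' \in V$ and all $t \geq 0$,
\[
  \abs{\sum_{z = 1}^d \zeta_{z, t}(a) - \sum_{z = 1}^d \zeta_{z, t}(a')}
  = \abs{\sum_{z = 1}^d \innp{\frac{q_z}{\sqrt{a_z}} - \frac{q_z}{\sqrt{a'_z}}}\, \Psi_{z, t}}
  \leq d\,L\,M\,\abs{a - a'},
\]
with $d L M$ an almost surely finite constant independent of $t$. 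Hence $a \mapsto \sum_{z=1}^d \zeta_{z,t}(a)$ is, almost surely, Lipschitz in $a$ uniformly in $t$, and in particular continuous in $a$ uniformly in $t$ — exactly the hypothesis needed to invoke the continuous mapping theorem and obtain (\ref{conv:g}).

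There is no genuine obstacle in this argument; the one point deserving care is the almost sure finiteness of $\sup_{t}\abs{\Psi_{z,t}}$, which I reduce to continuity of $\tilde{B}$ on the compact interval $[0,\tau_z]$ (via the integration-by-parts rewrite), after which everything collapses to the elementary Lipschitz estimate for $a \mapsto q_z/\sqrt{a_z}$ bounded away from the origin.
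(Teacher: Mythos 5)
Your proof is correct, but it takes a genuinely different route from the paper's. Both arguments begin from the same structural observation: $\zeta^1$ carries no $a$-dependence, and in $\zeta^2_{z,t}(a)$ the parameter enters only through the deterministic scalar $q_z/\sqrt{a_z}$, which is Lipschitz on $V$ once $V$ is bounded away from the coordinate axes (the paper's $\delta$-estimate bounding $\innp{1/\sqrt{a_z}-1/\sqrt{b_z}}^2$ by a multiple of $(a_z-b_z)^2$ is exactly your Lipschitz bound). Where you diverge is in how the $a$-independent random factor is controlled uniformly in $t$: the paper bounds $\b{E}\sup_{t\geq 0}(\cdot)^2$ via Doob's martingale maximal inequality and then invokes Kolmogorov's continuity criterion to produce a continuous modification, whereas you argue pathwise --- integration by parts turns $\Psi_{z,t}$ into $\int_0^{t\wedge\tau_z}\! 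S_{z,s}\,\tilde{B}_{z,s}\,\d{s}$, whose supremum over $t$ is a.s.\ finite by continuity of $\tilde{B}$ on the compact interval $[0,\tau_z]$, giving an almost surely Lipschitz, hence uniformly-in-$t$ continuous, dependence on $a$ with no moment estimates at all. Your version is more elementary and in one respect more robust: a second-moment bound of order $\abs{a-b}^2$ is at the borderline for the Kolmogorov--Chentsov criterion when $a$ ranges over a $d$-dimensional neighborhood (one needs an exponent exceeding the dimension, which would require boosting to higher Gaussian moments), while your pathwise Lipschitz estimate sidesteps that issue and feeds directly into the continuity hypothesis needed for the continuous-mapping step (\ref{conv:g}). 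The one ingredient you must (and do) supply beyond the paper's proof is the a.s.\ continuity of $\tilde{B}$, which holds because it is a mean-zero Gaussian martingale whose bracket (\ref{eq:Btilde}) is continuous and finite on $[0,\tau_z]$ thanks to $S_{z,\tau_z}>0$ and $C_{z,\tau_z-}>0$.
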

    \begin{proof}
      For any $a, b \in V$, it follows that
      \begin{align*}
        & \sum_{z = 1}^d \zeta_{z, t}(a) - \sum_{z = 1}^d \zeta_{z, t}(b)\\
        & = \sum_{z = 1}^d 
        q_z \innp{\frac{1}{\sqrt{a_z}} - \frac{1}{\sqrt{b_z}}} \nn
        &\quad\times \innp{\phi_{z, 
          t \wedge \tau_z} \tilde{B}_{z, t \wedge \tau_z}
          - \int_0^{t \wedge \tau_z}\!\d{\tilde{B}_{z, s}} \phi_{z, 
        s}}.
      \end{align*}
      Since $S_{\tau_z} > 0$ for all $z$, from Doob's martingale 
      inequality (cf.~\cite{karatzas2012brownian}),
      \begin{align*}
        \b{E} \sup_{t \geq 0} \innp{\sum_{z = 1}^d \zeta_{z, t}(a) 
          - \sum_{z = 1}^d \zeta_{z, t}(b)}^{2} \leq  C \sum_{z = 1}^d 
        \innp{\frac{1}{\sqrt{a_z}} - \frac{1}{\sqrt{b_z}}}^{2},
      \end{align*}
      for some arbitrary constant $C$. For each $z \in \b{N}_d$, since 
      $a_z$ and $b_z$ are sufficiently close to $q_z \in (0, 1)$, it 
      follows that there exists some $\delta > 0$ such that 
      $a_z \wedge b_z > \delta$. Therefore, 
      \begin{align*}
        \innp{\frac{1}{\sqrt{a_z}} - \frac{1}{\sqrt{b_z}}}^{2} 
        & = \frac{1}{a_z b_z} (\sqrt{a_z} - \sqrt{b_z})^{2} \\ 
        & \leq \delta^{-2}(\sqrt{a_z} - \sqrt{b_z})^{2}
        \innp{\frac{\sqrt{a_z} + \sqrt{b_z}}{\sqrt{a_z} + \sqrt{b_z}}}^{2} \\
        & \leq \frac{1}{4 \delta^3} (a_z - b_z)^{2}.
      \end{align*}
      Combining the above two results gives 
      \begin{align*}
        \b{E} \sup_{t \geq 0} \innp{\sum_{z = 1}^d \zeta_{z, t}(a) - 
        \sum_{z = 1}^d \zeta_{z, t}(b)}^{2}
        & \leq  C \abs{a - b}^{2}, 
      \end{align*}
      and so, by Kolmogorov's continuity criterion (cf. \cite{karatzas2012brownian}), the 
      desired result follows.
    \end{proof}

    The above lemma, along with the argument immediately preceding, gives 
    the following.

    \begin{thm}
      \label{thm:conv_total}
      Let $\sum_{z = 1}^d \zeta_{z, t}^n(\cdot)$ and 
      $\sum_{z = 1}^d \zeta_{z, t}(\cdot)$ be defined as in 
      Corollary~\ref{cor:conv_sum_1}, then 
      \begin{align}
        \sum_{z = 1}^d \hat{\zeta}_{z, t}\innp{\frac{\mathbf{n}}{n}} 
        \todistribution \sum_{z = 1}^d \zeta_{z, t}(\mathbf{q}), 
        \quad \mathrm{in}\,\, \s{D}_{\b{R}}[0, \infty),\,\, \mathrm{as}\,\,
          n \to \infty.
      \end{align}
    \end{thm}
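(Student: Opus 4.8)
The plan is to treat Theorem~\ref{thm:conv_total} as an application of the (extended) continuous mapping theorem, assembling pieces already in hand. First I would record that $\mathbf{n}$ is multinomial with $\mathbf{n}/n\toprobability\mathbf{q}$, and combine this with Corollary~\ref{cor:conv_sum_1} by Theorem~4.4 of \cite{billingsley2013} to obtain the joint convergence $\innp{\mathbf{n}/n,\,\set{\sum_{z=1}^d\hat{\zeta}_{z,t}(a):t\geq0}}\todistribution\innp{\mathbf{q},\,\set{\sum_{z=1}^d\zeta_{z,t}(a):t\geq0}}$, where the second coordinate is viewed as a random element of $\l^\infty\innp{V\times[0,\infty)}$ by virtue of the uniform-in-$a$, uniform-in-$t$ Skorohod representations furnished by Lemma~\ref{lem:conv_zeta_parts}. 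Because each $t\wedge\tau_z$ makes the relevant processes eventually constant, the distinction between $\s{D}_{\b{R}}[0,\infty)$ and the uniform-norm space is harmless here.

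Next I would introduce the evaluation map $g:V\times\l^\infty\innp{V\times[0,\infty)}\to\l^\infty\innp{[0,\infty)}$, $g(a,f)=f(a,\cdot)$, and use the identity $\sum_{z=1}^d\hat{\zeta}_{z,t}\innp{\mathbf{n}/n}=g\innp{\mathbf{n}/n,\,\sum_{z=1}^d\hat{\zeta}_z}$ together with its limiting analogue $\sum_{z=1}^d\zeta_{z,t}(\mathbf{q})=g\innp{\mathbf{q},\,\sum_{z=1}^d\zeta_z}$, so that the random-proportion statistic is genuinely the image under $g$ of the fixed-proportion family. The Lipschitz-type estimate displayed immediately before the statement shows $g$ is continuous at every pair $(a,f)$ at which $f$ is continuous at $a$ uniformly in $t$; the one remaining ingredient, namely that $a\mapsto\sum_{z=1}^d\zeta_{z,t}(a)$ is continuous at $\mathbf{q}$ uniformly in $t$, is exactly Lemma~\ref{lem:cont_a}. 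Applying the continuous mapping theorem (cf.~\cite{van1996new}) then gives $g\innp{\mathbf{n}/n,\,\sum_{z=1}^d\hat{\zeta}_z}\todistribution g\innp{\mathbf{q},\,\sum_{z=1}^d\zeta_z}$, which is precisely the asserted convergence in $\s{D}_{\b{R}}[0,\infty)$.

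The delicate point is not the continuous mapping step but ensuring the objects live where it applies: one needs the whole $a$-indexed field $\set{\hat{\zeta}_t(a):a\in V,\,t\geq0}$ to converge as a single random element of $\l^\infty\innp{V\times[0,\infty)}$, not merely marginally for each fixed $a$. This is the reason Lemma~\ref{lem:conv_zeta_parts} was stated with almost-sure Skorohod representations uniform over both $a\in V$ and $t\geq0$; those uniform statements are what legitimize passing from marginal-in-$a$ convergence to convergence of the field and, in turn, the application of Theorem~5.1 of \cite{billingsley2013} inside Corollary~\ref{cor:conv_sum_1}. I would spell out this uniformity bookkeeping explicitly, since it is the load-bearing part of the argument; once it is in place, the modulus-of-continuity estimate for the limit and the continuity bound for $g$ are routine.
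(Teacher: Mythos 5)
Your proposal matches the paper's own argument: Theorem~\ref{thm:conv_total} is proved there exactly as you describe, by combining the joint convergence $\innp{\mathbf{n}/n,\sum_z\hat{\zeta}_z}\todistribution\innp{\mathbf{q},\sum_z\zeta_z}$ (Theorem~4.4 of \cite{billingsley2013}), the evaluation map $g(a,f)=f(a,\cdot)$ with its continuity estimate, Lemma~\ref{lem:cont_a}, and the continuous mapping theorem. Your added remark on needing convergence of the whole $a$-indexed field in $\l^\infty\innp{V\times[0,\infty)}$, justified by the uniform Skorohod representations of Lemma~\ref{lem:conv_zeta_parts}, is the same mechanism the paper relies on, just stated more explicitly than the paper bothers to.
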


    \begin{coro}
      \label{cor:conv_sum}
      If $\hat{\zeta} = \sum_{z = 1}^d \zeta_{z, \tau_z}(q)$, then 
      \begin{align*}
        \hat{\zeta} \sim N(0, \sigma^2),
      \end{align*}
      where 
      \begin{align*}
        \sigma^2&= \sum_{z = 1}^d q_z \phi_{z, 0}^2 - \innp{\sum_{z = 1}^d q_z 
        \phi_{z, 0}}^{2} \nn
       &\qquad\qquad - \sum_{z = 1}^d q_z \int_0^{\tau_z}\! \frac{\d{S}_{z, 
        t}}{C_{z, t-}} \innp{\frac{\phi_{z, t}}{S_{z, t}}}^{2}
      \end{align*}
    \end{coro}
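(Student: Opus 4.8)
Everything that enters $\hat\zeta$ is already Gaussian, so the plan is simply to identify the law of $\hat\zeta$ and compute its variance. By Lemma~\ref{lem:conv_zeta_parts}, $\zeta_{z,t}(q)=\zeta^1_{z,t}+\zeta^2_{z,t}(q)$ where $\zeta^1$ and $\zeta^2(q)$ are independent mean-zero Gaussian processes with bracket processes given by (\ref{eqn:zeta_1_bracket}) and, through $\tilde B$, by (\ref{eqn:zeta_2_bracket}) and (\ref{eq:Btilde}). Since each integral defining $\zeta^1_{z,t}$ and $\zeta^2_{z,t}(q)$ is constant for $t\ge\tau_z$ (exactly as for the Meyer bracket in (\ref{eq:meyer})), one has $\hat\zeta=\sum_{z=1}^d\zeta_{z,\tau_z}(q)=\sum_{z=1}^d\zeta_{z,T}(q)$ for any $T\ge\max_z\tau_z$; hence $\hat\zeta$ is the terminal value of the Gaussian limit process of Theorem~\ref{thm:conv_total}, in particular a finite linear combination of evaluations of two independent Gaussian processes, so it is mean-zero Gaussian. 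It then remains only to show $\mathrm{Var}(\hat\zeta)=\sigma^2$, which by independence splits as $\mathrm{Var}\big(\sum_z\zeta^1_{z,\tau_z}\big)+\mathrm{Var}\big(\sum_z\zeta^2_{z,\tau_z}(q)\big)$; I would handle the two pieces separately.

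For the first piece I would evaluate all components at the common time $T$ and read off $\mathrm{Cov}(\zeta^1_{z,T},\zeta^1_{w,T})=-q_zq_w\phi_{z,0}\phi_{w,0}+\delta_{z,w}q_z\phi_{z,0}^2$ from (\ref{eqn:zeta_1_bracket}); summing over $z$ and $w$ collapses to $\sum_z q_z\phi_{z,0}^2-\big(\sum_z q_z\phi_{z,0}\big)^2$, which is the first two terms of $\sigma^2$.

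For the second piece, the Kronecker delta in (\ref{eq:Btilde}) makes the components $\tilde B_z$ independent across $z$, so $\mathrm{Var}\big(\sum_z\zeta^2_{z,\tau_z}(q)\big)=\sum_z\mathrm{Var}(\zeta^2_{z,\tau_z}(q))$. The small observation that does the work is $\phi_{z,\tau_z}=\int_{\tau_z}^{\tau_z}S_{z,s}\,\mathrm{d}s=0$, which annihilates the boundary term in (\ref{eqn:zeta_2_bracket}) and leaves $\zeta^2_{z,\tau_z}(q)=\sqrt{q_z}\int_0^{\tau_z}\phi_{z,s}\,\mathrm{d}\tilde B_{z,s}$, a Wiener integral against the Gaussian martingale $\tilde B_z$. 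Its variance is $q_z\int_0^{\tau_z}\phi_{z,s}^2\,\mathrm{d}\langle\tilde B_z,\tilde B_z\rangle_s=q_z\int_0^{\tau_z}\phi_{z,s}^2\,\frac{\mathrm{d}\Lambda_{z,s}}{S_{z,s}C_{z,s-}}$, and using continuity of $S_{z,\cdot}$ to substitute $\mathrm{d}\Lambda_{z,s}=-\mathrm{d}S_{z,s}/S_{z,s}$ turns this into $-q_z\int_0^{\tau_z}\frac{\mathrm{d}S_{z,s}}{C_{z,s-}}\big(\phi_{z,s}/S_{z,s}\big)^2$. Summing over $z$ yields the third term of $\sigma^2$, and adding the two pieces gives $\mathrm{Var}(\hat\zeta)=\sigma^2$.

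There is no genuine obstacle --- the statement is essentially a bookkeeping corollary of Lemma~\ref{lem:conv_zeta_parts} --- but the one point requiring care is that $\hat\zeta$ aggregates the processes at the \emph{distinct} terminal times $\tau_z$, so one must invoke the saturation of the integrals (and the across-$z$ independence of $\tilde B$) both to be sure the joint law is Gaussian and to be sure the covariances recorded in (\ref{eqn:zeta_1_bracket}) are the ones that enter. The only non-algebraic ingredient is the identity $\mathrm{d}\Lambda_{z,s}=-\mathrm{d}S_{z,s}/S_{z,s}$, which is licensed by the standing assumption that the survival functions are continuous.
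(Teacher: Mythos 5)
Your proposal is correct and follows essentially the same route as the paper's proof: evaluate at $t=\tau_z$ so the boundary term in (\ref{eqn:zeta_2_bracket}) vanishes (leaving $\sqrt{q_z}\int_0^{\tau_z}\phi_{z,t}\,\d{\tilde{B}}_{z,t}$), use independence of $\zeta^1$ and $\zeta^2$ and of the $\tilde{B}_z$ across $z$ to get normality and to split the variance, read the $\zeta^1$-covariances from (\ref{eqn:zeta_1_bracket}) and recombine to $\sum_z q_z\phi_{z,0}^2-\bigl(\sum_z q_z\phi_{z,0}\bigr)^2$, and compute the $\zeta^2$ terms via the bracket (\ref{eq:Btilde}) with $\d{\Lambda}_{z,t}=-\d{S}_{z,t}/S_{z,t}$. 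You are merely a bit more explicit than the paper about the saturation of the integrals at the distinct times $\tau_z$ and about the final hazard-to-survival substitution, which the paper leaves implicit in ``recombining the final terms.''
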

    \begin{proof}
      Note that when $t = \tau_z$, we have 
      \begin{align*}
        \zeta_{z, \tau_z}(q) & = \zeta^1_{z, \tau_z} 
        + \sqrt{q_z} \int_0^{\tau_z}\!\d{\tilde{B}}_{z, t}\,\phi_{z, 
        t},
      \end{align*}
      which are independent and normally distributed, implying that 
      $\hat{\zeta}$ is also normally distributed. Furthermore
      \begin{align*}
        \b{E}\hat{\zeta}^2 & = \sum_{z = 1}^d \innp{\zeta^1_{z, \tau_z}
        + \sqrt{q_z} \int_0^{\tau_z}\!\d{\tilde{B}}_{z, t} \phi_{z, t}}^{2}\\
        & \quad + \sum_{\substack{z, w = 1 \\ z \neq w}}^d 
        \innp{\zeta^1_{z, \tau_z} + \sqrt{q_z} \int_0^{\tau_z}\!\d{\tilde{B}}_{z, t} \phi_{z, t}} \nn
        &\qquad\qquad\qquad\times
        \innp{\zeta^1_{w, \tau_w} + \sqrt{q_z} \int_0^{\tau_w}\!\d{\tilde{B}}_{w, t} \phi_{w, t}} \\ 
        & = \sum_{z = 1}^d 
            \innp{\b{E}\innp{\zeta^1_{z, \tau_z}}^{2} + \b{E} q_z 
            \innp{\int_0^{\tau_z}\!\d{\tilde{B}}_{z, t} \phi_{z, 
          t}}^{2}} \nn
        &\qquad\qquad - \sum_{\substack{z, w = 1 \\ z \neq w}}^d 
        \b{E}\zeta^1_{z, \tau_z} \zeta^1_{w, \tau_w} \\ 
        & = \sum_{z = 1}^d \innp{q_z (1 - q_z) \phi_{z, 0}^{2}
        + q_z \int_0^{\tau_z}\!\d{\Lambda_{z, t}}
        \frac{\phi^{2}_{z, t}}{S_{z, t} C_{z, t-}}} \nn
       &\qquad\qquad - \sum_{\substack{z, w = 1 \\ z \neq w}}^d 
        q_z q_w \phi_{z, 0} \phi_{w, 0},
      \end{align*}
      which after recombining the final terms, gives the desired result. 
    \end{proof}
    
\section{Computation}

\subsection{Installation of R package}

The following code installs the R package from github sources
\begin{lstlisting}
# install devtools if not already installed
install.packages("devtools")
library(devtools)
install_github("joshchang/calonesurv")
library(calonesurv)
\end{lstlisting}

\subsection{Simulation of data used in this manuscript}
\label{s:datasim}
We simulated draws from the populations mentioned in the main text using the following R code:

\begin{lstlisting}
library(survival)
library(calonesurv)

p = 0.75
n = 20
samples = 10000
tvals = seq(0,1/explambda*4,by=0.05)

explambda = 1/4
wshape = 1
wscale = 5

mix_results <- matrix(nrow = length(tvals), ncol = samples)
pure_results <- matrix(nrow = length(tvals), ncol = samples)

for(i in 1:samples){
  n1 = sum(rbinom(n,1,p) )
  n2 = n-n1
  #while(n2<2){
  #  n1 = sum(rbinom(n,1,p) )
  #  n2 = n-n1
  #}
  t1 = sort(rexp(n1,explambda))
  t2 = sort(rweibull(n2,wshape,wscale))
  
  km1 = cadlag(survfit(Surv(t1)~1))
  km3 = cadlag(survfit(Surv(c(t1,t2))~1))
  pure <- cadlag(km3$time,c(1,km3$surv))
  
  if(n2>0){
    km2 = cadlag(survfit(Surv(t2)~1))
    mix <- n1/n*km1 + n2/n*km2
    mix_results[,i] <- mix(tvals)
  }
  else{
    mix_results[,i] <- pure(tvals)
  }
  
  pure_results[,i] <- pure(tvals)
}




\end{lstlisting}

\subsection{Real-world example}
\label{sec:realworld}
\begin{lstlisting}
require(coin)
require(survival)
source("Theta.R")
library(data.table)
library(ggplot2)

surv_data = with(subset(survival::lung,ph.ecog %in% 0:2), 
                 data.frame(population = sex, 
                            censor = as.numeric(status==1), 
                            time = time, cohort = ph.ecog ))
                            
out = Theta_hat(surv_data)
print(out)
print(confint(out))
print(pvalue.Theta_hat(out))

\end{lstlisting}

\subsubsection{Simulations for examining the sampling distribution of $\hat\Theta$}

\begin{lstlisting}

J = 10000
N = c(40,80,136)


typeone = data.table("method" = rep(0,J*length(N)*3), "n" = rep(0,J*length(N)*3), "P" = rnorm(J*length(N)*3))
Thetavals = data.table("n" = rep(0,J*length(N)), "Theta" = rep(0,J*length(N)))

# Simulations for evaluating Type-I error

k = 0
j = 1
for(n in N){
  i = 1
  while(i< J*3){
    indices = sample(1:136,n)
    s_data = subset(surv_data,population==1)[indices,]
    s_data$population[1:(n/2)] = 2
    
    tryCatch({
      out = Theta_hat(s_data)
      p_Theta_hat = pvalue.Theta_hat(out)
      p_lr = pvalue(logrank_test(Surv(time,1-censor)~as.factor(population), data = s_data))
      fit = survfit(Surv(time,1-censor)~as.factor(population), data = s_data)
      t1 = ten(fit)
      invisible(comp(t1))
      p_w = attr(t1,'lrt')$pNorm[2]
      Thetavals[j] = list( "n" =n , "Theta" = as.numeric(out))
      typeone[J*3*k + i,] = list("method"=1,"n"=n,"P" = p_Theta_hat)
      typeone[J*3*k + i+1,] = list("method"=2,"n"=n,"P" = p_w)
      # logrank is 3
      typeone[J*3*k + i+2,] = list("method"=3,"n"=n,"P" = p_lr)
      i = i + 3
      j = j + 1
    },error = function(err) print(err))
    
  }
  k = k+1
}
pdf("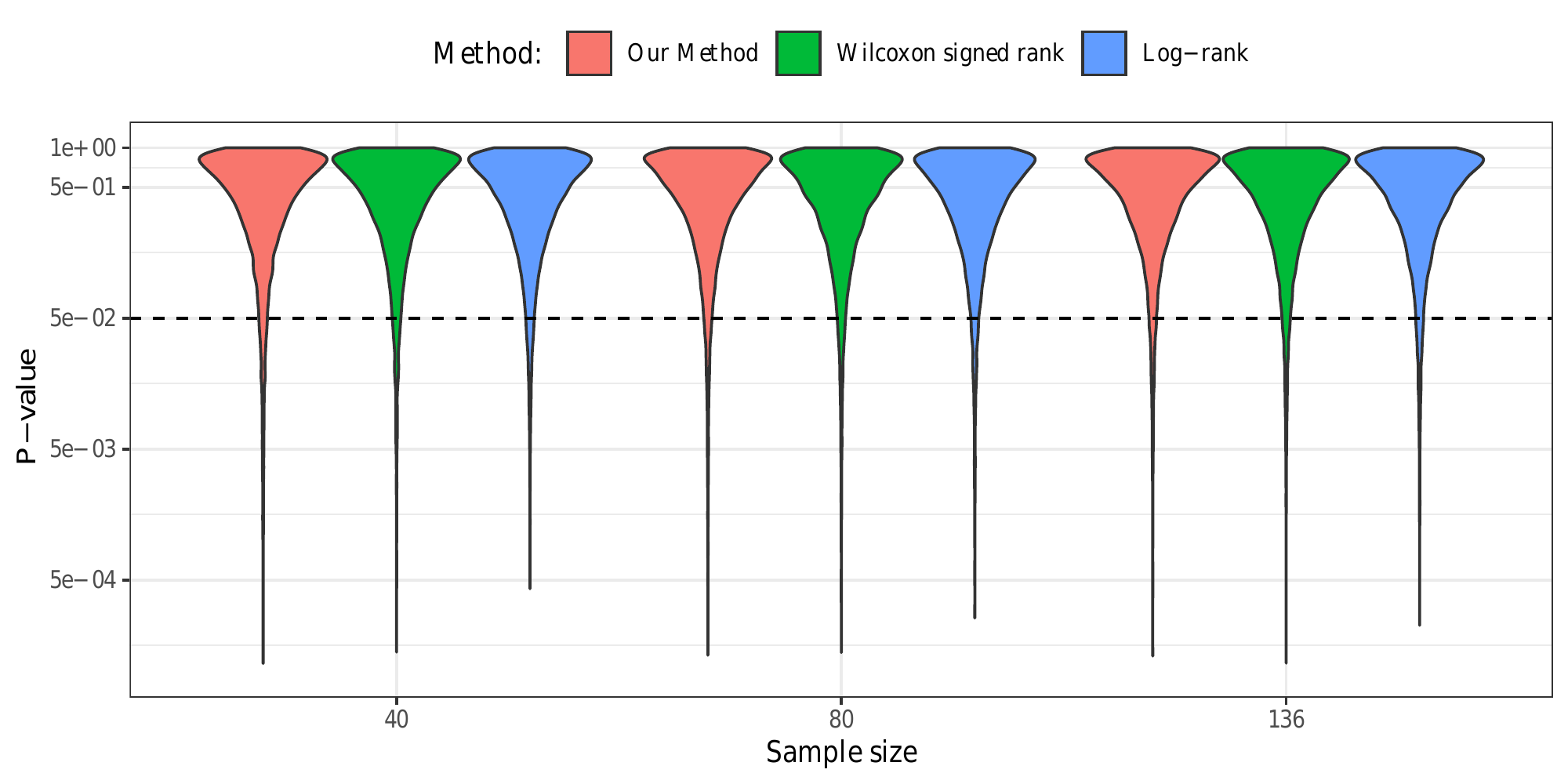",family="CM Roman", width=8, height=4)
typeone$method = as.factor(typeone$method)
levels(typeone$method) = c("Our Method", "Wilcoxon signed rank","Log-rank")
pp = ggplot(typeone, aes(factor(n),P))
print(pp + geom_violin(aes(fill=factor(method)))+geom_hline(aes(yintercept=0.05),linetype='dashed')+ ylim(1e-4,1)+
  scale_y_continuous(trans='log2',breaks = c(0.0005,0.005,0.05,0.5,1),limits = c(1e-4,1)) + theme_bw() + theme(legend.position="top") +
  ylab("P-value") + xlab("Sample size") + guides(fill=guide_legend(title="Method: ")))
dev.off()


typeone$reject = typeone$P < 0.05
print(ftable(reject ~ method + n, data = typeone))


# Simulations for evaluating asymptotic convergence in distribution

Thetavals_2 = data.table("n" = rep(0,J*length(N)), "Theta" = rep(0,J*length(N)))

j = 1
for(n in c(50,60,70)){
  i = 1
  while(i< J*3){
    indices = sample(1:136,n)
    s_data = subset(surv_data,population==1)[indices,]
    s_data$population[1:(n/2)] = 2
    tryCatch({
      out = Theta_hat(s_data)
      Thetavals_2[j] = list( "n" =n , "Theta" = as.numeric(out))
      j = j + 1
      i = i + 3
    },error = function(err) print(err))
    
  }
}

# compute the theoretical Gaussian densities

n = 50
indices = sample(1:136,n)
s_data = subset(surv_data,population==1)[indices,]
s_data$population[1:(n/2)] = 2
out_50 = Theta_hat(s_data)

n = 60
indices = sample(1:136,n)
s_data = subset(surv_data,population==1)[indices,]
s_data$population[1:(n/2)] = 2
out_60 = Theta_hat(s_data)

n = 70
indices = sample(1:136,n)
s_data = subset(surv_data,population==1)[indices,]
s_data$population[1:(n/2)] = 2
out_70 = Theta_hat(s_data)

grid <- with(dd, seq(-750,750, length = 200))
normaldens =
  data.frame( 
    Theta = rep(grid,3),
    n = rep(c(50,60,70),each=length(grid)),
    density = c(dnorm(grid, mean = 0, sd = sqrt(slot(out_50,'variance'))),
                dnorm(grid, 0, sqrt(slot(out_60,'variance'))),
                dnorm(grid, 0, sqrt(slot(out_70,'variance'))))
  )


pdf("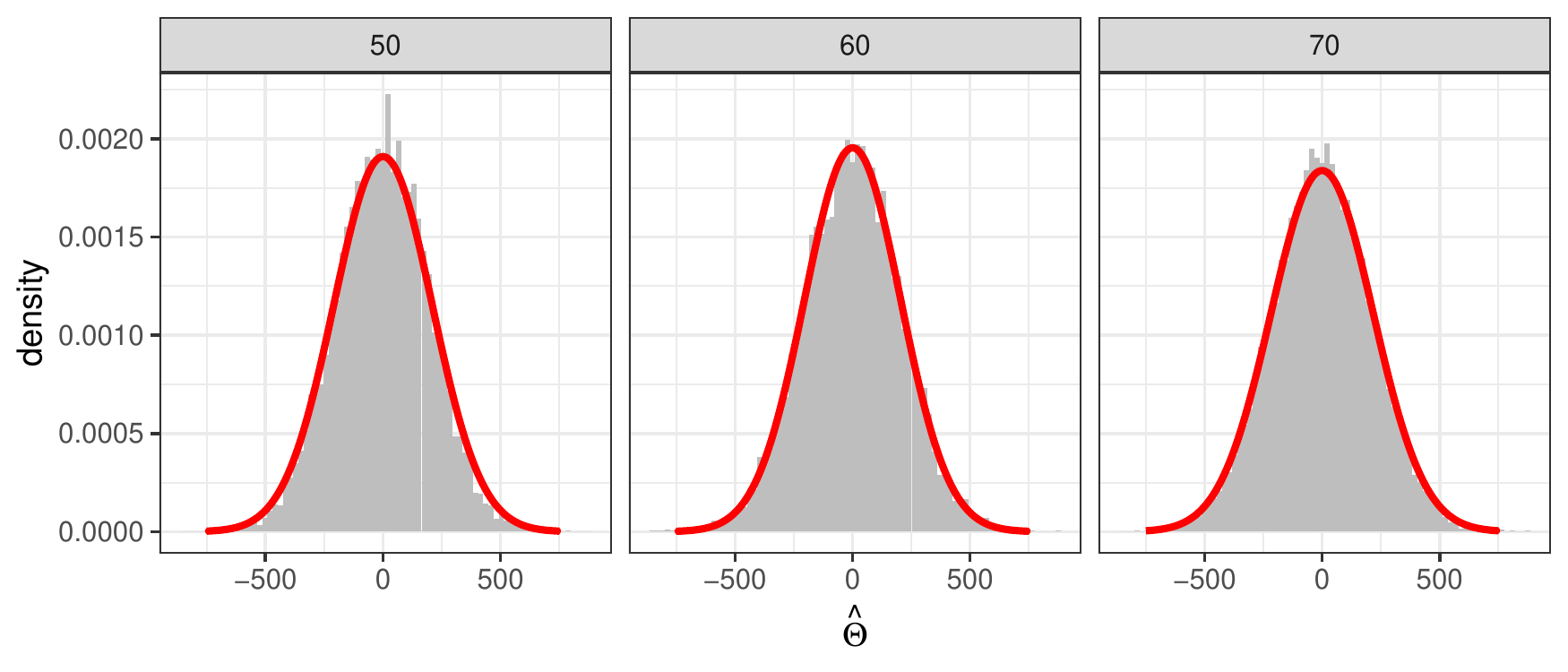",family="CM Roman", width=8, height=3)
pp2 = ggplot(Thetavals_2,aes(Theta)) + geom_histogram(aes(x = Theta, y = ..density..),bins = 80,fill = "grey") + facet_wrap(~n) +
  theme_bw() + xlab(expression(hat(Theta)))+ 
  geom_line(aes(y = density), data = normaldens, colour = "red", size=1.1)  
print(pp2)
dev.off()
\end{lstlisting}

\end{document}